\algrenewcommand\algorithmicrequire{\textbf{\quad Input:}}
\algrenewcommand\algorithmicensure{\textbf{\quad Output:}}
\newcommand{\qedsymb}{\hfill{\rule{2mm}{2mm}}}
\renewenvironment{proof}{\begin{trivlist} \item[\hspace{\labelsep}{\bf \noindent Proof.\/}] }{\qedsymb\end{trivlist}}%
\newcommand{\moo}{\mathbf{Div\_Alloc}}
\newcommand{\mooo}{\mathbf{Indiv\_Alloc}}
\newcommand{\mooa}{\mathbf{Sell\_Div}}
\newcommand{\moob}{\mathbf{Sell\_Modif}}
\newcommand{\moov}{\mathbf{MVCG}}
\newcommand{\moos}{\mathbf{RS\_Liquid}}
\newcommand{\mool}{\mathbf{Liquid\_Div}}
\newcommand{\mo}{\mathbf{Rev\_Indiv}}
\newcommand{\modd}{\mathbf{Rev\_Div}}
\newcommand{\mof}{\mathbf{Rev\_Offline}}
\newcommand{\mtempl}{\mathbf{RS\_Online}}
\newcommand{\alloc}{Alloc}
\newcommand{\pay}{\mathcal{P}}
\newcommand{\payrev}{P^m_{rev}}
\newcommand{\mytab}{$\mbox{   }$$\mbox{   }$$\mbox{   }$$\mbox{   }$}
\newcommand{\lp}{\left(}
\newcommand{\rp}{\right)}
\newcommand{\lb}{\left[}
\newcommand{\rb}{\right]}
\newcommand{\argmax}{\mathrm{argmax}}
\theoremstyle{plain}
\newtheorem{theorem}{Theorem}
\newtheorem{lemma}{Lemma}
\newtheorem{observation}{Observation}
\DeclareMathOperator{\E}{E}
\newcommand{\MyFrame}[1]{\noindent \framebox[\textwidth]{ \begin{minipage}{0.97\textwidth} #1 \end{minipage}}}%
\newif\ifshowproof
\begin{document}
\title{Truthful Secretaries with Budgets}

\author[1]{Alon Eden}
\author[1]{Michal Feldman}
\author[1]{Adi Vardi}
\affil[1]{Blavatnik School of Computer Science, Tel-Aviv University}
\affil[ ]{\textit{alonarden@gmail.com, michal.feldman@cs.tau.ac.il, adivardi@gmail.com} }
\renewcommand\Authands{ and }

\maketitle

% ABSTRACT %%%%%%%%%%%%%%%%%%%%%%%%%%%%%%%%%%%%%%%%%%%%%%%%%%%%%%%%%
\vspace{-0.1cm}
\begin{abstract}
We study online auction settings in which agents arrive and depart dynamically in a random (secretary) order, and each agent's private type consists of the agent's arrival and departure times, value and budget.
We consider multi-unit auctions with additive agents for the allocation of both divisible and indivisible items.
For both settings, we devise truthful mechanisms that give a constant approximation with respect to the auctioneer's revenue, under a large market assumption.
For divisible items, we devise in addition a truthful mechanism that gives a constant approximation with respect to the liquid welfare --- a natural efficiency measure for budgeted settings introduced by Dobzinski and Paes Leme [ICALP'14].
Our techniques provide high-level principles for transforming offline truthful mechanisms into online ones, with or without budget constraints.
To the best of our knowledge, this is the first work that addresses the non-trivial challenge of combining online settings with budgeted agents.
\end{abstract}

% Intro %%%%%%%%%%%%%%%%%%%%%%%%%%%%%%%%%%%%%%%%%%%%%%%%%%%%%%%%%
\section{Introduction}
In a typical setting of sales of online ad slots, advertisers arrive
at different times, each with her own preferences and budget. The
auctioneer (e.g., \textit{cnn.com}) has to make decisions about the allocation of the ad slots to the advertisers and how much to
charge them. Such settings give rise to various
optimization problems associated with different objective functions.
The most common objective functions that have been considered in the
literature are \begin{enumerate*}[(a)]
	\item social welfare: the sum of the bidders' valuations from their allocations, and
	\item the auctioneer's revenue: the total payment the auctioneer derives from the sale.
\end{enumerate*}

This scenario inspired the online model studied by Hajiaghayi et al.~\cite{hajiaghayi2004adaptive} , who used the {\em random sampling} paradigm to design a truthful mechanism for this setting.
The random sampling paradigm was first introduced in the context of selling $m$ identical items in offline settings without budgets \cite{goldberg2001competitive}. This mechanism divides the agents into two sets, $A$ and $B$, by tossing a fair coin. Next, it calculates prices $p_A$ and $p_B$ (per item) based on the sets $A$ and $B$, respectively.
Finally, it sells $m/2$ items at price $p_A$ to agents in $B$, in an arbitrary order, and similarly $m/2$ items at price $p_B$
to agents in $A$. This mechanism is known to give a constant approximation with respect to the optimal revenue.

Hajiaghayi et al.~\cite{hajiaghayi2004adaptive} managed to apply the random sampling paradigm to the online setting, while losing only a constant factor in the approximation of the revenue.
Their mechanism is based on the following observations and techniques:
\begin{itemize}
\item A random partition of the $n$ agents into two sets in the offline setting is equivalent to the following procedure in the online setting:
Given a set of $n$ agents arriving in a random order, place the first $j$ agents \big(where $j$ is sampled from the binomial distribution $B(n,1/2)$\big) in one set, called the sampling set, and the remaining ones in a second set, called the performance set.
\item One can acquire the private types of agents in the sampling set by applying the VCG mechanism on this set.
\item Since the random sampling mechanism sells items in an arbitrary order, it is without loss of generality that items are sold to
agents in the performance set according to their online order of arrival.
\end{itemize}

Applying the random sampling paradigm to online settings was a big step toward the applicability of truthful mechanisms. Yet, like many other papers in the mechanism design literature, it ignored the fact that agents are usually restricted by budget constraints. Since budget constraints have a huge effect on agent behavior in real-life scenarios, our goal in this paper is to incorporate budgets into the online setting described above.

The allocation of indivisible items  among agents with budget constraints in an offline setting was considered by Borgs et al.~\cite{borgs2005multi}. As identified by~\cite{borgs2005multi}, budget constraints impose many challenges on auction settings.
For example, the seminal VCG mechanism
%~\cite{vickrey1961counterspeculation,clarke1971multipart,groves1973incentives} 
~\cite{groves1973incentives} 
 fails since the utility function is no longer quasi-linear.
Borgs et al.~\cite{borgs2005multi} devised a truthful mechanism that adjusted the random sampling mechanism to deal with budget constraints.
Their mechanism approximates the revenue within a constant factor under a large market assumption.

Budget constraints impose challenges also with respect to the social welfare objective function. It is well known that if budgets are private, no truthful mechanism can give a non-trivial approximation (better than $n$) to the social welfare, even for the case of a single item.
This is because any such mechanism must allocate the item to an agent with a high value relative to all other agents,
for any budget she reports.
To make things worse, even if budgets are public, it has been shown in \cite{Dobzinski2013Liquid} that
no truthful auction among $n$ agents can achieve a better than $n$ approximation to the optimal social welfare.

This problem was recently addressed by Dobzinski and Paes Leme.~\cite{Dobzinski2013Liquid}, who introduced the {\sl liquid welfare}
objective function. A natural interpretation of the social welfare
function is the maximum revenue an omniscient auctioneer can
extract. Applying this interpretation to budgeted settings gives rise to the {\em liquid welfare} objective function,
defined as follows:
For every agent, consider the minimum between her value for the allocation (i.e., her willingness to pay) and her budget (i.e., her ability to pay), and take the sum of this minimum over all agents.
Dobzinski and Paes Leme considered the allocation of a divisible item to agents with public budgets (and private values), and devised truthful mechanisms that give a constant approximation to the optimal liquid welfare.
Lu et al. \cite{lu2014liquid} extended this to agents with {\em private} budgets and any monotonically non decreasing valuation function. The mechanism devised in \cite{lu2014liquid} is a probabilistic combination of a random sampling mechanism and a modified VCG mechanism.

In the present paper, the main challenge is the combination of two non-trivial challenges; namely the online setting  and budgeted agents.
From the description above it seems that a lot of the necessary components are already in place.
First, ~\cite{borgs2005multi} and \cite{lu2014liquid} devised random sampling based mechanisms that give a constant approximation
with respect to revenue and liquid welfare in budgeted offline settings. Second, ~\cite{hajiaghayi2004adaptive} presented a
technique that adjusts mechanisms based on the random sampling paradigm to online settings.
A natural approach for dealing with online budgeted settings would be to combine these techniques to
obtain  nearly-optimal mechanisms. Not surprisingly, as demonstrated in the sequel,
this approach fails badly.
Indeed, in order to address the combination of online settings and budget constraints, new ideas are required.

\subsection{Our Model} \label{Setting}
We consider the combination of the online setting without budgets
of \cite{hajiaghayi2004adaptive} and the offline settings with
budgets of ~\cite{lu2014liquid,borgs2005multi}. In our setting, there is
a set $N=\{1,\ldots, n\}$ of agents. Every agent $i\in N$ has a
private type (known only to the agent) that is represented by a
tuple $\lp a_i, d_i, v_i, b_i \rp$, where:
\begin{itemize}
    \item $a_i$ and $d_i$ are the respective arrival and departure time of agent $i$ (clearly, $d_i\geq a_i$).
    The interval $[a_i,d_i]$ is referred to as agent $i$'s \emph{time frame}.
    \item $v_i$ is agent $i$'s value for receiving a single item within her time frame.
    \item $b_i$  is the budget of agent $i$.
\end{itemize}

We consider an online auction with a secretary ``flavor." An
adversary states a vector of time frames $\lp\lb a_1, d_1\rb,\lb
a_2,d_2\rb,\ldots,\lb a_n, d_n\rb\rp$ such that $a_i< a_{j}$ for
every $i< j$, and a vector of pairs (value, budget), $(\lb v_1,
b_1\rb,$ $\lb v_2,b_2\rb,\ldots,\lb v_n, b_n\rb)$. A random
permutation is used to match these pairs and determine the types of
the agents. More formally, let $\Pi_N$ be the set of all possible
permutations on the set $N$. A permutation $\pi:N\mapsto N$ is sampled
uniformly at random from the set $\Pi_N$. Then, agent $i$'s type is given by the tuple $\lp a_i, d_i, v_{\pi\lp i\rp},
b_{\pi\lp i\rp}\rp$.

As in standard secretary-like problems, a random order is essential
to get any guarantee on the performance of the mechanism. As in
\cite{hajiaghayi2004adaptive} \footnote{Based on personal
communication with the authors, this is essentially what is assumed
for the correctness of Mechanism $RM_k$ in Section 6 in \cite{hajiaghayi2004adaptive}.}, we assume
that arrival times are distinct, but the results also extend to
non-distinct arrival times if agents cannot bid before their real
arrival times \footnote{We refer to Appendix \ref{appsec:tie-break} for a description of the tie-breaking rule for this case.}. It should be noted that most of the papers studying online settings with strategic agents
consider restrictive inputs or weaker solution concepts (e.g.,
~\cite{lavi2000competitive,friedman2003pricing,hajiaghayi2004adaptive,hajiaghayi2005online,lavi2005online,parkes2003mdp}).

In our model, agents can manipulate any component of their type. In particular, they can report earlier or later arrival and departure times, and arbitrary values and budgets.
Upon arrival, agents report their type. Based on these reports, the mechanism determines an allocation and a payment for each agent.
The utility of agent
$i$ for obtaining an allocation of $x_i$, within her {\em real} time frame, for a payment of $p_i$ is:
\begin{eqnarray} \label{ValuationFunction}
u_i(x_i,p_i)=
\begin{cases}
v_i  x_i - p_i &p_i\leq b_i\\
-\infty &p_i>b_i\\
\end{cases}.
\end{eqnarray}

We assume that agents are risk neutral.
We consider mechanisms that satisfy the following properties:
\begin{itemize}
    \item Feasibility: The mechanism does not sell more items than are available.
    \item Individual Rationality: An agent's expected utility from an allocation and payment is non-negative.
    \item Incentive Compatibility: An agent's expected utility is maximized when she reports her true type.
\end{itemize}
Since all the mechanisms devised in this paper are trivially feasible and individually rational, our analysis focuses on incentive compatibility.
\subsection{Our Results and Techniques}  \label{result}

In Section \ref{sec:highlevel} we establish techniques that constitute high level principles for transforming nearly optimal offline mechanisms into online ones.

In Section \ref{sec:div_indiv} we present methods for the allocation of divisible and indivisible items to budgeted agents who arrive online. We first describe a simple greedy allocation method for divisible items at a given price per item $p$, where agents are ordered randomly. Every agent whose value exceeds $p$ is allocated as many items as she can afford with her budget at a price per item $p$, limited by the number of remaining items.

In an offline setting, this method can be used for selling indivisible items as well. This is done by translating fractions of items into lottery tickets with the corresponding probabilities, and resolving the market via correlated lotteries upon the "arrival" of the last agent. In an online setting, this method cannot be applied since by the time the last agent arrives, all previous agents may have already departed. To overcome this obstacle, we present an allocation method that resolves the agent's lottery in an online fashion, and loses only a factor 2 in revenue compared to the offline setting.

In Section \ref{sec:rs_base} we present a prototype mechanism that transforms an offline mechanism based on the random sampling paradigm into an online one. The prototype mechanism receives as input the set of agents, the number of items to be allocated, a pricing function and an allocation function.

Our starting point is the mechanism devised in \cite{hajiaghayi2004adaptive} for an online setting without budget constraints. This mechanism splits the agents into a sampling set and a performance set.
It induces agents in the sampling set to reveal their true values by applying VCG on them, and uses this information to extract revenue from the agents in the performance set.
In many settings (such as budgeted settings), VCG cannot be applied. To mitigate this problem, the prototype mechanism applies a random sampling based method on the sampling set.

The above technique ensures that in cases where arrival time is public information, agents in the sampling set do not have incentive to misreport their private information (such as the value and budget in our model). Since we consider models in which arrival time is also private, we need to ensure that agents have no incentive to manipulate their arrival time either. For agents with unit-demand valuations, this issue is handled in \cite{hajiaghayi2004adaptive} by ensuring that the price used in the sampling set never exceeds that in the performance set, and every agent in the sampling set whose value exceeds the price receives it.

This technique, however, does not apply to agents with additive valuations.
Indeed, there are cases where the price used in the performance set is higher than the price used in the sampling set, and yet an agent in the sampling set may wish to delay her arrival time, in order to receive a larger number of items (as part of the performance set).  In order to ensure an agent cannot gain by delaying her arrival time, we treat an agent in the performance set \emph{as if} she was in the sampling set. In particular, an agent in the performance set receives at most the allocation she would have received had she been in the sampling set, and the same price.

We demonstrate the broad applicability of this prototype by applying it to settings with budgeted agents for the allocation of two different types of items (namely, divisible and indivisible) and for two different objective functions (namely, revenue and liquid welfare maximization).

In Section \ref{sec:rev} we devise mechanisms for the allocation of identical items (both divisible and indivisible) that approximate the optimal revenue. This is done by applying the above prototype using appropriately chosen pricing and allocation functions.
As in \cite{borgs2005multi}, without a large market assumption,
no truthful mechanism can achieve a constant approximation to the optimal revenue, even for the case of a single item.
This is because an agent whose budget is significantly greater than that of the others must receive the item for any value she reports. We establish the following theorem:\\\\
{\bf Theorem:} There exist truthful mechanisms for the allocation of divisible and indivisible items among budgeted agents with additive valuations in online settings that give a constant approximation to the optimal revenue, under a large market assumption.\\\\
In Section \ref{sec:liquid} we devise a mechanism for the allocation of identical divisible items that approximates the optimal liquid welfare.\\\\
{\bf Theorem:} There exists a truthful mechanism for the allocation of divisible items among budgeted agents with additive valuations in online settings that gives a constant approximation to the optimal liquid welfare.\\\\
Under a large market assumption, a similar mechanism to the one presented in Section \ref{sec:rev}
approximates the optimal liquid welfare to within a constant factor.
For liquid welfare, though, we devise a new truthful mechanism that gives a constant approximation for any market (i.e., without employing large market assumption). As in the offline setting described in \cite{lu2014liquid},
our mechanism is a probabilistic combination of a random sampling based mechanism and a modified VCG mechanism. However, both mechanisms should be adjusted to the online setting.
The random sampling mechanism is adjusted using the prototype mechanism described in Section \ref{sec:rs_base}.

The modified VCG mechanism should also be adjusted, since in the original one described in \cite{lu2014liquid},
all bids are received simultaneously, and this is inherently infeasible in online settings.
Indeed, when the last agent arrives, all other agents may have already departed. Therefore, we propose a new modified VCG as follows: The mechanism places the first $n/2$ agents in set $A$ and the rest in set $B$. Upon the arrival of the last agent in $A$, it applies the modified VCG mechanism of \cite{lu2014liquid} on the agents in set $A$, and determines a threshold value, which is greater than the highest bid in $A$. With probability 1/2, all the items are allocated to the winner in $A$; otherwise, all the items are allocated to the first agent in $B$ that surpasses the threshold. 
\subsection{Related Work}
Online auctions have been the subject of a vast body of work.
Lavi and Nisan \cite{lavi2000competitive} introduced an online model in which agents have decreasing marginal valuations over identical indivisible items. While in this model the agent's value is private, her arrival time is public. A wide variety of additional online auction settings, such as digital goods \cite{blum2005near,bar2002incentive} and combinatorial auctions \cite{awerbuch2003reducing}, have been studied under the assumption of private values and public arrival times.

Similarly to our model and the model presented in \cite{hajiaghayi2004adaptive}, online settings in which agents arrive in a random order were also considered in \cite{babaioff2007matroids,kleinberg2005multiple,kesselheim2013optimal}.

Friedman and Parkes \cite{friedman2003pricing} considered the case where an agent's arrival time is also part of her private information, and thus can also be manipulated. Since then, additional auction settings were studied under the assumption that an agent's arrival and departure times are private \cite{hajiaghayi2005online,parkes2003mdp,hajiaghayi2004adaptive,lavi2005online,friedman2003pricing}. 

The reader is referred to \cite{parkes2007online} for an overview of online mechanism design.

Offline mechanisms for budgeted agents have long been considered, both in the context of revenue maximization and of welfare maximization. In the context of revenue maximization, Abrams \cite{abrams2006revenue} also considered the model of \cite{borgs2005multi} where indivisible and identical items are sold to additive agents with hard budget constrains. She established the relation between the revenue achieved by an optimal mechanism that is allowed to offer multiple prices and the one of an optimal mechanism who is only allowed to sell at a single price. Maximizing revenue was also considered in Bayesian settings \cite{chawla2011bayesian} and with the goal of approximating the optimal envy free revenue \cite{devanur2013prior,feldman2012revenue}.

Considering the welfare maximization objective, as stated before, it is in general impossible to approximate the sum of agents' welfare using a truthful mechanism. To overcome this obstacle, a more relaxed solution concept of Pareto efficiency was considered in \cite{bhattacharya2010incentive,colini2012multiple,dobzinski2012multi,fiat2011single,goel2012polyhedral,goel2013clinching}, and in \cite{devanur2013prior} they focus on approximating the welfare of the optimal envy free mechanism.  

The most related works to our model are \cite{hajiaghayi2004adaptive,borgs2005multi,Dobzinski2013Liquid,lu2014liquid}, which are discussed in great detail throughout the paper.

\begin{comment}
Offline mechanisms for budgeted agents have long been considered, both in the context of revenue and of welfare maximization.
Borgs et al.\cite{borgs2005multi} considered a setting with multiple identical items and agents with additive valuations and hard private budgets. Using ideas derived from Goldberg et al. \cite{goldberg2001competitive}, they achieved an optimal revenue under a large market assumption.

Abrams \cite{abrams2006revenue} studied the same setting as in \cite{borgs2005multi}, and showed that unlike in settings without budgets, in the presence of hard budget constraints, one can design a uniform price auction that extracts at least half the revenue of the optimal auction with heterogeneous prices.
\end{comment} 

% High level techniques %%%%%%%%%%%%%%%%%%%%%%%%%%%%%%%%%%%%%%%%%%%%%%%%%%%%%%%%%

\section{High Level Techniques}\label{sec:highlevel}

\subsection{Divisible and Indivisible Allocation} \label{sec:div_indiv}
In this section we present methods for the allocation of divisible and indivisible items in online settings with budgets.
Let $S$ be a set of agents, $p$ a price and $k$ an amount of items.
Recall that every agent $i$ is associated with value $v_i$ and budget $b_i$.
We define $\moo$ to be a procedure for the allocation of divisible items:
\begin{figure}[H]
	\MyFrame{
		$\moo\lp S, p, k\rp$
		\begin{enumerate}
			\item Initialize $k'\gets k$, $x_i\gets 0$ for every $i\in S$.
			\item Let $\pi$ be a permutation of $S$ chosen uniformly at random from the set of all permutations.
			\item For $i=1,2,\ldots,|S|$:
			\begin{itemize}
				\item If $v_{\pi(i)}\geq p$:\\
					\mytab $x_{\pi(i)} \gets \min\{b_{\pi(i)}/p, k'\}$\\
					\mytab $k'\gets k'-x_{\pi(i)}$
			\end{itemize}
			\item Return $(x,x)$.
		\end{enumerate}
	} \caption{A procedure for the allocation of divisible items.}

\label{alg:M}
\end{figure}
$\moo$ is a greedy allocation procedure for divisible items at a given price per item $p$. 
%where agents are ordered randomly, and every agent whose value exceeds $p$ is allocated as many items as she can afford with her budget at price $p$, limited by the number of remaining items.
Note that $\moo$ returns two allocation vectors: one for the allocation and a second one for purposes of charging. 
While in this procedure the two allocations are identical, we keep this format for the sake of uniformity across divisible and indivisible allocations (see below).

Let $\mathcal{M}$ be a mechanism that returns an allocation vector $x$, such that $x_i$ is the allocation of agent $i$.
We define $\E_{\mathcal{M}}[x_i]$ to be the expected value of $x_i$, where the
expectation is taken over the coin flips of $\mathcal{M}$.
We show the following monotonicity property:
\begin{observation} \label{obs:Monotonicity}
	For every two sets of agents $S$ and $T$, such that $T\subseteq S$, agent $i \in T$,
	price $p$ and a (possibly fractional) number of items $k$, $\E_\moo[x^T_i] \geq \E_\moo[x^S_i]$ where $x^T, x^S$
	are the allocations returned by $\moo(T,p,k)$ and $\moo(S,p,k)$, respectively.
\end{observation}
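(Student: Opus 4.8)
The plan is to prove the inequality pointwise under a coupling of the randomness in the two runs of $\moo$, and then take expectations. First I would exploit the standard fact that if $\pi$ is a uniformly random permutation of $S$, then the order it induces on the subset $T$ is itself a uniformly random permutation of $T$. This lets me couple $\moo(S,p,k)$ and $\moo(T,p,k)$ by drawing a single random permutation $\pi$ of $S$, using $\pi$ for the run on $S$ and its restriction $\pi|_T$ for the run on $T$. Since $\E_\moo[x^T_i]$ is precisely the average of $x^T_i$ over uniform orders on $T$, the coupling reduces the claim to showing that for \emph{every} fixed $\pi$ we have $x^T_i(\pi|_T)\geq x^S_i(\pi)$; averaging over $\pi$ then yields the statement.

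The central observation driving the pointwise comparison is that the greedy rule makes the cumulative consumption telescope. Fix the order $\pi$ and consider the agents preceding $i$. Each eligible agent (one with value at least $p$) takes $\min\{b_j/p, k'\}$ of the current remaining stock $k'$, while agents of value below $p$ are skipped and consume nothing; processing the eligible ones in sequence, the total consumed before $i$'s turn is exactly $\min\{k, D\}$, where $D=\sum_j b_j/p$ is summed over the eligible agents appearing before $i$. Consequently the stock remaining when $i$ is reached equals $\max\{0, k-D\}$, a quantity determined solely by the total demand $D$ of the eligible predecessors of $i$.

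I would then compare $D_T$ and $D_S$, the demands of the eligible predecessors of $i$ in the two runs. Because $T\subseteq S$ and $\pi|_T$ preserves the relative order of $\pi$, the eligible agents of $T$ preceding $i$ form a subset of the eligible agents of $S$ preceding $i$, so $D_T\leq D_S$. The remaining-stock formula then gives $k'_T=\max\{0,k-D_T\}\geq\max\{0,k-D_S\}=k'_S$. Since agent $i$'s own value $v_i$ and budget $b_i$ are identical in both runs, either $v_i<p$ and both allocations are $0$, or $v_i\geq p$ and $x^T_i=\min\{b_i/p, k'_T\}\geq\min\{b_i/p, k'_S\}=x^S_i$. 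This establishes the pointwise inequality, and averaging over the coupled permutation completes the argument.

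The only genuinely delicate point is the telescoping identity for the greedy allocation; once it is in hand the monotonicity is immediate, since it reduces the entire run before $i$ to the single scalar $D$ which is manifestly monotone in the agent set. The coupling step is routine but must be stated carefully, because the two expectations are a priori taken over permutations of \emph{different} sets, and it is exactly the uniformity of the induced order on $T$ that reconciles them.
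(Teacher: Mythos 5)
Your proof is correct and rests on essentially the same idea as the paper's: the eligible predecessors of $i$ in $T$ are a subset of those in $S$, so less budget stands between $i$ and the remaining stock. The paper's own proof is a one-line assertion comparing only the \emph{expected} predecessor budget sums $B_T \leq B_S$; your explicit coupling of the two permutations together with the pointwise telescoping identity $k' = \max\{0, k-D\}$ is the rigorous completion of that argument (comparing expectations of $D$ alone would not suffice, since the allocation is a nonlinear function of $D$).
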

\begin{proof}
	Let $B_S$, $B_T$ be the expected sum of the budgets of all agents
	with value at least $p$ that are processed before agent $i$ in a
	random permutation on sets $S$ and $T$ respectively. It is clear
	that $B_T \leq B_S$.
\end{proof}
Budget limitations might prevent the allocation of integral items. 
In offline settings with indivisible items, a common solution for this problem is to sell correlated lottery tickets to the agents. 
Due to the online nature of our settings, this cannot be done. Therefore, We propose to realize an agent's allocation within her time frame, via uncorrelated lotteries, as described in the procedure below.

This procedure calculates the number of items allocated
to each agent, when agents arrive in a random order.
The procedure returns two allocation vectors, $x$ and $\tilde{x}$ where $x$ determines the actual allocation,
and $\tilde{x}$ determines the allocation agents are charged for.
For every agent $i$, the allocation $x_i$ must be integral (since items are indivisible), while $\tilde{x}_i$
might be fractional.

\begin{figure}[H]
	\MyFrame{
		$\mooo \lp S, p, k\rp$
		\begin{enumerate}
			\item Let $\pi$ be a random permutation of $S$.
			\item Initialize $k'\gets k$, $x_i\gets0$ and $\tilde{x}_i\gets 0$ for every $i\in S$.
			\item For $i=1,2,\ldots,|S|$:
			\begin{enumerate}
				\item If $v_{\pi(i)} \geq p$:
				\begin{itemize}
					\item If $k'\leq \frac{b_{\pi(i)}}{p}$:\\
					\mytab Set $x_{\pi(i)}\gets k'$ and $\tilde{x}_{\pi(i)}\gets k'$.
					\item Else:\\
					\mytab Set $\tilde{x}_{\pi(i)}\gets \frac{b_{\pi(i)}}{p}$.\\
					\mytab Set $x_{\pi(i)}\gets\begin{cases}
					\lceil\frac{b_{\pi(i)}}{p}\rceil &\mbox{with probability }\frac{b_{\pi(i)}}{p}-\lfloor\frac{b_{\pi(i)}}{p}\rfloor.\\
					\lfloor\frac{b_{\pi(i)}}{p}\rfloor &\mbox{with probability }1-\frac{b_{\pi(i)}}{p}+\lfloor\frac{b_{\pi(i)}}{p}\rfloor.\\
					\end{cases}$
				\end{itemize}
				\item $k'\gets k'-x_{\pi(i)}$.
			\end{enumerate}
			
			\item Return $(x, \tilde{x})$.
		\end{enumerate}
	}
	\caption{A procedure for the allocation of indivisible items.}
	
	\label{alg:M'}
\end{figure}

The following lemma shows that by using uncorrelated lotteries we lose at most a factor of $2$. The proof is deferred to Appendix \ref{appsec:div_indiv}.
\begin{lemma} \label{lm:rounding}
	Let $(x,\tilde{x})$ be the tuple returned by
	$\mooo\left(S,p,k\right)$ and $(x',\tilde{x}')$ the tuple returned by
	$\moo\left(S,p,k\right)$. Then,
	$\E_{\mooo}\left[\sum_{i\in S}\tilde{x}_i\right] \geq \sum_{i\in
		S}\tilde{x}'_i/2$.
\end{lemma}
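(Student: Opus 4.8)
The plan is to first reduce the right-hand side to a deterministic quantity. Write $d_i = b_i/p$ and call agent $i$ \emph{eligible} if $v_i\ge p$. Since $\moo$ greedily sells $\min\{d_i,k'\}$ to each eligible agent in turn, the total amount it sells is $\min\{k,\ \sum_{i:v_i\ge p} d_i\}$ \emph{regardless} of the random order (the greedy process either exhausts the supply or satisfies all eligible demand). Hence $\sum_{i\in S}\tilde x'_i = V$ is a constant, where $V:=\min\{k,\ \sum_{i:v_i\ge p} d_i\}$, and it suffices to prove $\E_{\mooo}[\sum_{i\in S}\tilde x_i]\ge V/2$.

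Next I would relate the charged amounts to the amounts actually allocated. Condition on the permutation $\pi$ and on the coin flips of all agents processed before a given agent. For that agent the charged value $\tilde x_{\pi(i)}$ is determined by this history (it equals $\min\{d_{\pi(i)},k'\}$), while the allocation $x_{\pi(i)}$ is either equal to $\tilde x_{\pi(i)}$ (the case $k'\le d_{\pi(i)}$) or is the randomized rounding of $d_{\pi(i)}$, whose conditional mean is exactly $d_{\pi(i)}=\tilde x_{\pi(i)}$. In both cases $\E[x_{\pi(i)}\mid\text{history}]=\tilde x_{\pi(i)}$, so summing over $i$ and taking total expectation gives $\E_{\mooo}[\sum_i\tilde x_i]=\E_{\mooo}[\sum_i x_i]$. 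The goal is thus reduced to showing that $\mooo$ sells, in expectation, at least $V/2$ \emph{items}.

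To bound the expected number of items sold, I would couple the rounding. Let $x_i^{\mathrm{free}}=\floor{d_i}+B_i$ be the integral amount eligible agent $i$ would consume with unlimited supply, where $B_i$ is the Bernoulli variable of her rounding step, and set $X=\sum_{i:v_i\ge p}x_i^{\mathrm{free}}$. The key structural claim is that the number of items $\mooo$ actually sells equals $\min\{k,X\}$: as long as the running free demand stays below $k$ every eligible agent receives exactly $x_i^{\mathrm{free}}$, and the first time a Case-1 cap is hit the cumulative allocation is exactly $k$ (one checks that reaching a cap forces $X\ge k$, since at a capped agent $M_{\sigma}\le d_{\sigma}$ integral implies $x_{\sigma}^{\mathrm{free}}\ge M_{\sigma}$). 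Since $\E X=\sum_{i:v_i\ge p}d_i$ and $\min\{k,\E X\}=V$, it remains to prove $\E[\min\{k,X\}]\ge\tfrac12\min\{k,\E X\}$.

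This last inequality is the crux, and it is where independence of the rounding coins is essential: it is \emph{false} for a general variable of the given mean, and holds only because $X$ is a constant plus a sum of independent $\{0,1\}$ variables, so that $\mathrm{Var}\,X\le\E X$. I would prove it by splitting on whether $\E X\le k$ or $\E X>k$. In the demand-limited case $\E[\min\{k,X\}]=\E X-\E[(X-k)^+]$, and one bounds the truncation loss $\E[(X-k)^+]\le\E[(X-\E X)^+]\le\tfrac12\sqrt{\mathrm{Var}\,X}\le\tfrac12\sqrt{\E X}$ via Cantelli's inequality, which is at most $\tfrac12\E X$ once $\E X\ge1$ (the residual range $\E X<1$ is handled directly using $\Pr[X\ge1]\ge1-e^{-\E X}$). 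In the supply-limited case one shows $\Pr[X\ge k]\ge\tfrac12$ — a median-versus-mean statement for sums of independent Bernoulli variables — whence $\E[\min\{k,X\}]\ge k\Pr[X\ge k]\ge k/2$. The main obstacle is precisely this step: controlling how much the \emph{unbiased but integral} online rounding can lose exactly when the supply constraint binds, which is what the factor $2$ pays for.
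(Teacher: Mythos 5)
Your proof is correct, but it follows a genuinely different route from the paper's. The paper first reduces to the case where every eligible budget is below $p$, and then argues case-by-case ($k=1$ with excess demand, total demand at most $p$, and $k\ge 2$): the heart of its argument is a direct computation of the expected charged fraction accumulated before the supply runs out, comparing the access probabilities $P_i$ of the lottery process to those of a continuous uniform-threshold process by induction. You instead (i) note that $\sum_i\tilde x'_i$ is the deterministic quantity $\min\{k,\sum_{i:v_i\ge p}b_i/p\}$, (ii) convert charged amounts into allocated amounts via the martingale identity $\E\left[\sum_i\tilde x_i\right]=\E\left[\sum_i x_i\right]$, (iii) show that the total allocation equals $\min\{k,X\}$ for an order-independent $X$ that is an integer constant plus a sum of independent Bernoulli variables, and (iv) reduce everything to the single inequality $\E[\min\{k,X\}]\ge\frac{1}{2}\min\{k,\E X\}$. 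This decomposition cleanly isolates where the factor $2$ is lost and replaces the paper's per-item induction with a one-shot concentration statement; the price is that your supply-limited case leans on the median-versus-mean property of Poisson-binomial sums (Jogdeo--Samuels), a citable but nontrivial external fact, whereas the paper's argument is self-contained. Two small notes: the bound $\E[(X-\E X)^+]\le\frac{1}{2}\sqrt{\mathrm{Var}\,X}$ is Cauchy--Schwarz applied to $\frac{1}{2}\E|X-\E X|$, not Cantelli; and the structural identity in (iii) (in particular, that hitting the cap forces $X\ge k$) uses integrality of the remaining supply, so it implicitly assumes $k$ is integral --- the same simplification the paper makes explicitly in its Case C.
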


\subsection{Online Implementation of Random Sampling Based Mechanisms} \label{sec:rs_base}

In this section we present a high level technique for the implementation of random sampling
mechanisms in online settings. In the sequel, we show that by using suitable allocation and pricing
functions, this technique can be used to devise truthful and nearly optimal mechanisms
in online budgeted settings. However, as discussed in Section \ref{result}, this technique can be used for
other settings, with or without budgets.

For every agent $i$ let $\hat{a}_i, \hat{d}_i$ be the reported arrival time and departure time of agent $i$ respectively.
Let $N$ be a set of agents and $m$ be an amount of items.
Let $\pay$ be a function that receives a set of agents as input and returns a price.
Let $\alloc$ be a function that receives as input a set of agents, a price per item,
and an amount of items, and returns two allocation vectors, $x$ and $\tilde{x}$,
where $x$ is the actual allocation and $\tilde{x}$ is used for charging the agents.
Note that $x_i$ and $\tilde{x}_i$ might be different (e.g., if $\alloc$ runs lotteries).

Mechanism  $\mtempl$ works as follows:
Toss a fair coin $n-1$ times, and let $j$ be the number of tails plus one.
Place the first $j$ agents in set $A$ and the rest in set $B$.
Upon the arrival of the last agent in set $A$, apply the random sampling paradigm on the set $A$.
Upon the arrival of each agent in set $B$, place her into set $B_1$ or $B_2$, by tossing a fair coin.
For each agent placed in $B_1$, give her allocation as if she was in set $A_1$, and charge her a price computed according to set $A_2$.
Apply an analogous allocation and charging for agents in set $B_2$.

\begin{figure}[H]
	\MyFrame{
		$\mtempl\lp N, m, \pay, \alloc \rp$\\\\
		{\bf Partition:}
		\vspace{-3mm}
		\begin{enumerate}
			\item Toss a fair coin $n-1$ times independently. Let $j$ be the number of tails \emph{plus} 1.\\
			Let $t_0 \gets \hat{a}_j$ (the reported arrival time of the $j$-th agent).
			
			\item Let $A$ be the set of the first $j$ agents (based on reported arrival
			time) and let $B$ be $N \setminus A$.
		\end{enumerate}
		{\bf Sampling phase:} (Based on set $A$)
		\vspace{-3mm}
		\begin{enumerate}
			\item At time $t_0$, partition the agents in $A$ into two sets $A_1$ and $A_2$ by
				tossing an independent fair coin for each agent. Assume without loss of generality that agent $j$ belongs
				to $A_2$.
				
			\item Let $(x,\tilde{x})$ be the tuple returned by
				$\alloc(A_1,\pay(A_2),\frac{m}{4})$ and $(x',\tilde{x}')$
				the tuple returned by $\alloc(A_2,\pay(A_1),\frac{m}{4})$.
				
			\item Resolve the allocation and payments of the sampling set:
				%At time $t_0$:
				
			\begin{enumerate}
					
				\item For every agent $i \in A_1$, if $\hat{d}_i \geq t_0$
					(i.e., agent $i$ did not depart yet), allocate to agent $i$ $x_i$
					items and charge her $\tilde{x}_i \cdot \pay(A_2)$.
					
				\item Apply an analogous procedure for agents in
					$A_2$, i.e., for every agent $i \in A_2$, if $\hat{d}_i \geq t_0$,
					allocate to agent $i$ $x'_i$ items and charge her $\tilde{x}'_i \cdot
					\pay(A_1)$.
					
			\end{enumerate}
		\end{enumerate}			
		{\bf Revenue collection phase:} (Based on set $B$)
		\vspace{-3mm}
		\begin{enumerate}
			\item Let $m_1 \gets \frac{m}{4}$, $m_2 \gets \frac{m}{4}$.
						
			\item For every agent $i \in B$, once she arrives, toss a fair coin.
						
			\begin{enumerate}
				\item If she turns heads: 
				\begin{enumerate}
					\item Assign agent $i$ to set $B_1$.
					\item Let $(x',\tilde{x}')$ be the tuple returned by $\alloc\left(\{A_1 \cup i\},\pay(A_2-j),\frac{m}{4}\right)$.
					\item Let $x_i \gets \min\{x'_i, m_1\}$,$\tilde{x}_i \gets \min\{\tilde{x}'_i, m_1\}$.
					\item  Allocate to agent $i$ $x_i$ items at a price of
					$\tilde{x}_i \cdot \pay(A_2 - j)$.
					\item Update $m_1 \leftarrow m_1 - x_i$.
				\end{enumerate}
				\item If the coin turns tails:
				\begin{enumerate}
					\item Assign agent $i$ to set $B_2$.
					\item Let $(x',\tilde{x}')$ be the tuple returned by $\alloc\left(\{A_2 \cup i\}
					- j,\pay(A_1),\frac{m}{4}\right)$.
					\item Let $x_i \gets \min\{x'_i, m_2\}$,
					$\tilde{x}_i \gets \min\{\tilde{x}'_i, m_2\}$.
					\item Allocate To agent $i$ $x_i$ items at a price of
					$\tilde{x}_i \cdot\pay(A_1)$.
					\item Update $m_2 \leftarrow m_2 - x_i$.
				  \end{enumerate}
				\end{enumerate}
			\end{enumerate}
	}
	\caption{A random sampling based online mechanism.}  \label{alg:TemplSampling}
\end{figure}

Let $N' = N - \{j\}$. The following observation shows that sets $A_1,A_2-j,B_1$ and $B_2$ form a random partition of the agents in $N'$.

\begin{observation} \label{obs:Partition}
	Mechanism $\mtempl$ places each agent $i \in N'$ randomly and independently in either
	sets $A_1$, $A_2 - j$, $B_1$ or $B_2$ with equal probability.
\end{observation}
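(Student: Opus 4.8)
The plan is to prove the stronger statement that the pivot agent $j$ is uniformly distributed over $N$ and that, conditioned on any choice of pivot, the remaining $n-1$ agents are assigned independently and uniformly among the four sets $A_1, A_2-j, B_1, B_2$; the observation is exactly the second half of this claim. The first thing I would note is that membership in these four sets depends only on an agent's arrival position and on the mechanism's coins, never on its value or budget. Hence I can work in the combined probability space generated by three mutually independent sources of randomness: the random arrival order (the permutation $\pi$ that assigns types to arrival slots, making each agent's position uniform), the $n-1$ fair coins that determine $j$ (so that $j-1$ is distributed as $B(n-1,1/2)$ and the pivot is the agent in arrival position $j$), and the ``splitting'' coins that send each non-pivot agent of $A$ to $A_1$ or $A_2$ and each agent of $B$ to $B_1$ or $B_2$.

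Next I would fix an arbitrary target: an agent $q \in N$ to play the role of the pivot, together with an arbitrary labeling $L$ of the remaining $n-1$ agents into the four sets. Write $a$ for the number of agents that $L$ sends to $A_1$ or $A_2-j$, and $b = n-1-a$ for the number it sends to $B_1$ or $B_2$. For this target to be realized we need, simultaneously, that $j = a+1$; that $\pi$ places the $a$ agents destined for $A$ in the first $a$ positions, pins $q$ to position $a+1$, and places the $b$ agents destined for $B$ in the last $b$ positions; and that every splitting coin agrees with $L$.

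I would then compute the probability of this target as a product over the three independent sources: $\Pr[j=a+1] = \binom{n-1}{a}2^{-(n-1)}$; the probability that $\pi$ realizes the required block arrangement of positions is $\frac{a!\,b!}{n!}$ (the $A$-agents and the $B$-agents may be permuted freely within their respective blocks while $q$ is pinned to position $a+1$); and the probability that the $a+b = n-1$ splitting coins all agree with $L$ is $2^{-(n-1)}$ (the pivot is forced into $A_2$ by convention and uses no coin). Multiplying and using the identity $\binom{n-1}{a}\frac{a!\,b!}{n!} = \frac{1}{n}$ collapses the expression to $\frac{1}{n}\cdot 4^{-(n-1)}$, which is independent of both $q$ and $L$.

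Since every (pivot, labeling) pair carries the same probability $\frac{1}{n}4^{-(n-1)}$, the pivot is uniform over $N$ and, conditioned on the pivot, the joint labeling of $N'$ is uniform over all $4^{n-1}$ assignments; equivalently, each agent of $N'$ lands in each of $A_1, A_2-j, B_1, B_2$ with probability $1/4$, independently. The step I expect to be most delicate is precisely this bookkeeping: one must verify that the binomial weight for $j$, the count of order-preserving permutations, and the splitting-coin factor combine exactly, and in particular that the cancellation $\binom{n-1}{a}\frac{a!\,b!}{n!} = \frac{1}{n}$ is what makes the choice of $n-1$ coins (rather than $n$) and the convention of fixing the pivot into $A_2$ the right ones to eliminate all dependence on $a$. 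Everything else reduces to a routine product of independent probabilities.
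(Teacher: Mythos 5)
Your proof is correct and follows essentially the same route as the paper's: both decompose the randomness into the binomial choice of $j$, the random arrival order, and the independent splitting coins, and both hinge on the fact that a binomially-sized prefix of a randomly ordered sequence is equivalent to i.i.d.\ fair coin flips sending each agent of $N'$ to $A-j$ or $B$. The paper simply asserts this equivalence (``the sets $A-j$ and $B$ form a uniformly-sampled random partition of $N'$''), whereas you verify it explicitly via the cancellation $\binom{n-1}{a}\frac{a!\,b!}{n!}=\frac{1}{n}$, and in passing also establish the uniformity of the pivot $j$ over $N$, a fact the paper uses later (e.g.\ in Lemma \ref{lm:OPTtToOPTRelationRev}) but argues separately.
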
 
\begin{proof}
	Observe that $|N'|=n-1$ and $|A - j| \sim B(n-1,1/2)$. Since the agents of $N'$ arrive in a random order, the
	sets $A-j$ and $B$ form a uniformly-sampled random partition of $N'$. Moreover, agents in
	$A - j$ are divided into sets $A_1$ and $A_2 - j$ uniformly at random, and agents in set $B$ are divided into sets $B_1$
	and $B_2$ in a similar manner.
\end{proof}

% revenue %%%%%%%%%%%%%%%%%%%%%%%%%%%%%%%%%%%%%%%%%%%%%%%%%%%%%%%%%
\section{Revenue Maximization} \label{sec:rev}
In this Section we study mechanisms for the allocation of identical items (both divisible and indivisible)
with constant approximation to the optimal revenue under a large market assumption.

Let $S$ be a set of agents, $p$ a price and $k$ a
number of items. We define $p^*_{S,k} =\\
\argmax_p\left\{\min\left(\sum_{i \in S, v_i \geq p}{\frac{b_i}{p}},k\right) \cdot p\right\}$, i.e., $p^*_{S,k}$ is
the price that maximizes the revenue of selling at most $k$ items to
agents in $S$ at a uniform price. Let $OPT(S,k) =
\min\left(\sum_{i \in S, v_i \geq
p^*_{S,k}}{\frac{b_i}{p}},k\right) \cdot p^*_{S,k}$, i.e., the optimal
revenue from selling at most $k$ items to agent in $S$ at a uniform
price. Let $OPT^*(S,k)$ be the optimal revenue from selling at most
$k$ items to agents in $S$ at heterogeneous prices. 
The following theorem from
\cite{abrams2006revenue} shows that a mechanism that gives a constant approximation to $OPT(S,k)$ also gives a constant
approximation with respect to $OPT^*(S,k)$.
\begin{theorem} \cite{abrams2006revenue} \label{thm:uniToArbitraryRev}
	For every set of agents $S$ and $k$ items, $OPT(S,k) \geq
	\frac{OPT^*(S,k)}{2}$.
\end{theorem}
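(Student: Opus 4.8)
The plan is to sandwich $OPT^*(S,k)$ using the single quantity $B(p) := \sum_{i \in S,\, v_i \ge p} b_i$, the total budget of the agents who would accept per-item price $p$. First I would record the (immediate) reformulation of the uniform-price benchmark: selling at price $p$ yields revenue $\min\big(\sum_{i: v_i \ge p} b_i/p,\, k\big)\cdot p = \min(B(p), pk)$, so that $OPT(S,k) = \max_p \min(B(p), pk)$, and in particular $OPT(S,k) \ge \min(B(p), pk)$ for \emph{every} price $p$.

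The key step is an upper bound on the heterogeneous benchmark: for every price $p$,
\[ OPT^*(S,k) \le B(p) + pk. \]
To see this, take any heterogeneous-price outcome in which agent $i$ pays $P_i$ in total for $x_i$ items, with $\sum_i x_i \le k$. Individual rationality gives $P_i \le v_i x_i$ (an agent never pays more than her value for what she gets), and the budget constraint gives $P_i \le b_i$. Splitting the agents at the threshold $p$ into $H = \{i : v_i \ge p\}$ and $L = \{i : v_i < p\}$, the agents in $H$ contribute $\sum_{i \in H} P_i \le \sum_{i \in H} b_i \le B(p)$, while the agents in $L$ contribute $\sum_{i \in L} P_i \le \sum_{i \in L} v_i x_i \le p \sum_{i \in L} x_i \le pk$; summing gives the bound.

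It then remains to exhibit one price at which $B(p)$ and $pk$ are \emph{balanced}, since there the minimum is at least half of the sum. As $B$ is nonincreasing while $pk$ is continuous and strictly increasing, $B(p)-pk$ is positive for small $p$ and negative for large $p$. If it vanishes at some $p_0$ with $B(p_0) = p_0 k$, then $OPT(S,k) \ge \min(B(p_0), p_0 k) = p_0 k$, while the upper bound at $p_0$ reads $OPT^*(S,k) \le B(p_0) + p_0 k = 2 p_0 k \le 2\, OPT(S,k)$. Otherwise $B(p)-pk$ jumps from nonnegative to negative at some value $u$; writing $B^- = B(u)$ and $B^+ = \lim_{p \downarrow u} B(p)$ we have $B^+ \le uk \le B^-$, so the uniform price $u$ gives $OPT(S,k) \ge \min(B^-, uk) = uk$, and the upper bound evaluated at a price just above $u$ gives $OPT^*(S,k) \le B^+ + uk \le 2uk \le 2\, OPT(S,k)$.

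I expect the last step to be the main obstacle: because $B$ is a step function (with jumps exactly at the distinct values), the two curves need not cross continuously, so one cannot simply pick the exact crossing price. The fix, as above, is to read the lower bound $OPT \ge uk$ off the price $u$ itself, while evaluating the upper bound $B(p)+pk$ at a price infinitesimally above $u$, where $B$ has already dropped to $B^+ \le uk$.
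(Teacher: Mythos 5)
The paper does not prove this statement: it is imported verbatim from Abrams (2006) and used as a black box, so there is no in-paper proof to compare against. Your argument is a correct, self-contained proof. The reformulation $OPT(S,k)=\max_p \min(B(p),pk)$ matches the paper's definition, the upper bound $OPT^*(S,k)\le B(p)+pk$ for every $p$ is valid (each agent's payment is at most $\min(v_ix_i,b_i)$, and splitting at the threshold $p$ charges the high-value agents to $B(p)$ and the low-value agents to $pk$ via $\sum_i x_i\le k$), and you correctly identified and handled the one delicate point, namely that $B$ is a step function so the curves $B(p)$ and $pk$ need not cross: reading the lower bound off the jump price $u$ (where $B(u)\ge uk$ still holds) while passing to the right-limit $B^+\le uk$ in the upper bound closes the factor of $2$. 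The only loose ends are trivial degeneracies (e.g., all budgets zero, where both benchmarks vanish), which do not affect the claim. This is essentially the standard revenue-curve crossing argument behind Abrams's single-price versus multi-price comparison.
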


\subsection{The Mechanism}
For the allocation of divisible and indivisible items we apply the $\mtempl$ mechanism described in the previous section. 
For divisible items, we define $\modd = \mtempl(N, m, \payrev, \moo)$, 
where $\payrev(S)=p^*_{S,\frac{m}{4}}$, and $\moo$ is as defined in Section \ref{sec:div_indiv}.
For indivisible items, we define 
$\mo = 
\mtempl(N, m, \payrev, \mooo)$, where $\payrev$ is as above, and $\mooo$ is as defined in Section \ref{sec:div_indiv}.

Recall that without a large market assumption, no truthful mechanism can achieve a constant approximation to the revenue.
Similarly to \cite{borgs2005multi}, we characterize the size of the market by $\epsilon$, described as follows:
Let $b^{\max}_{S}= \max\{b_i\}_{i\in S}$ be the maximum budget of
any agent in set $S$; $\epsilon(S,k)=\frac{b^{\max}_{S}}{OPT(S,k)}$.
Intuitively, a smaller $\epsilon$ implies a larger market.
The main result of this section is:
\begin{theorem} \label{thm:RevMain}
$\mo$ is a truthful mechanism that allocates indivisible items and gives a $32$-approximation to the optimal revenue 
as $\epsilon$ tends to 0. 
In addition, $\modd$ is a truthful mechanism that allocates divisible items and gives a $16$-approximation to the optimal revenue 
as $\epsilon$ tends to 0.
\end{theorem}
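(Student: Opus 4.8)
The plan is to treat incentive compatibility and the revenue guarantee separately, and to reduce the indivisible case to the divisible one. Observe that $\mooo$ differs from $\moo$ only in that it rounds the realized allocation $x$ while the charged quantity $\tilde x = \min\{b_i/p,k'\}$ is exactly the quantity $\moo$ allocates; moreover the rounding is unbiased, so $\E_{\mooo}[x_i]=\tilde x_i$ coincides with the divisible allocation. Consequently the expected allocation and the (deterministic) charge $\tilde x_i\cdot p$ are the same objects in both procedures, and truthfulness of $\modd$ transfers verbatim to $\mo$. For revenue, Lemma \ref{lm:rounding} shows the charged mass under $\mooo$ is at least half that under $\moo$ at the same price, so a $16$-approximation for $\modd$ immediately yields the $32$-approximation for $\mo$. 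It therefore suffices to prove the divisible statement.

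For truthfulness I would fix all coin flips of $\mtempl$ and argue pointwise. The decisive feature is that the price an agent faces never depends on her own report: an agent in $A_1$ is priced at $\payrev(A_2)$, one in $A_2$ at $\payrev(A_1)$, one in $B_1$ at $\payrev(A_2-j)$, and one in $B_2$ at $\payrev(A_1)$, and in each case the conditioning set excludes her. Holding set membership fixed, $\moo$ allocates $\min\{b_i/p,k'\}$ exactly when $\hat v_i\ge p$; hence crossing the threshold by misreporting the value is never profitable (reporting below $p$ forfeits the nonnegative per-item surplus $v_i-p$, while reporting above $p$ when $v_i<p$ yields negative utility), and the allocation is otherwise insensitive to the reported value. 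The charge equals $\min\{\hat b_i,k'p\}$, so overreporting the budget can only force a payment exceeding $b_i$, ruled out by the $-\infty$ penalty in \eqref{ValuationFunction}, while underreporting only shrinks the allocation; thus truthful reporting of value and budget is optimal.

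The delicate case is the arrival time, which can move an agent between $A$ and $B$ and reshuffle the four-way partition. Here I would invoke exactly the design principle stated before the theorem: an agent routed to $B_1$ (resp.\ $B_2$) receives $x_i=\min\{x'_i,m_1\}$ (resp.\ $\min\{x'_i,m_2\}$), where $x'_i$ is the allocation she would obtain were she appended to the sampling subset $A_1$ (resp.\ $A_2-j$), at the very price she would face there. By Observation \ref{obs:Monotonicity} together with the cap $\min\{\cdot,m_1\}$, no agent can secure a strictly larger allocation at an equal or lower price by delaying her arrival so as to land in the performance set; combined with the price independence above, and noting that $\hat d_i$ only gates whether a sampling-set allocation is realized, this rules out profitable arrival- and departure-time deviations. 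I expect this step --- reconciling the coin-driven repartition with the monotone cap across all conditionings --- to be the main obstacle.

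For the revenue guarantee I work in the divisible model and exploit Observation \ref{obs:Partition}, which makes $A_1,A_2-j,B_1,B_2$ four independent uniformly random quarters of $N'$. The analysis rests on two facts, each valid up to an additive slack of order $b^{\max}_{N}$. First, for a random quarter $Q$ one has $\E[OPT(Q,m/4)]\ge\tfrac14\,OPT(N,m)$: evaluating the objective at the price $p^*_{N,m}$, each agent lands in $Q$ independently with probability $1/4$, so both the budget mass above the price and the item cap $m/4$ scale by $1/4$ in expectation. Second, the classical random-sampling estimate of \cite{goldberg2001competitive,borgs2005multi} shows that selling at $\payrev(Q')=p^*_{Q',m/4}$ computed on one random quarter extracts a constant fraction of $OPT(Q',m/4)$ from a disjoint random quarter. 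Applying this to the revenue collected from $B_1$ priced by $A_2-j$ (and symmetrically $B_2$ priced by $A_1$), and tracking the factors --- a factor $4$ from restricting to one quarter, the factor-$2$ random-sampling loss, and the factor $2$ from Theorem \ref{thm:uniToArbitraryRev} needed to pass from $OPT$ to the true optimum $OPT^*$ --- yields the $16$-approximation for $\modd$; Lemma \ref{lm:rounding} then upgrades it to $32$ for $\mo$. The crux is the large-market hypothesis: every ``$\ge\tfrac14$'' and ``constant fraction'' statement holds only up to the additive terms controlled by $b^{\max}_{N}$, and it is precisely $\epsilon\to0$ that drives these terms to zero relative to $OPT$.
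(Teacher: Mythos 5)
Your overall architecture mirrors the paper's: truthfulness via price-independence of an agent's own report plus monotonicity of $\moo$ and the cap on performance-set allocations, and revenue via the Borgs et al.\ random-sampling concentration together with Theorem \ref{thm:uniToArbitraryRev}, with Lemma \ref{lm:rounding} converting the divisible bound into the indivisible one. However, your revenue accounting has a concrete gap. The mechanism does \emph{not} sell to $B_1$ by running $\moo(B_1,\payrev(A_2-j),m/4)$; each agent $i\in B_1$ receives the allocation of $\moo(A_1\cup\{i\},\payrev(A_2-j),m/4)$ capped at the remaining supply, i.e.\ she is forced to compete against the budgets of $A_1$ as well. This modification, which you correctly identify as the reason arrival-time deviations are unprofitable, also strictly reduces revenue, and quantifying that loss is the paper's main new technical step (Lemma \ref{lm:Restriction}: the modified sale $\moob(S,p,k,z)$ extracts at least a $z$ fraction of the revenue of $\mooa(S,p,k)$, proved via Observation \ref{obs:Monotonicity} and a case split on whether the supply is exhausted). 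Your factor count $4\times 2\times 2=16$ omits this factor of $z=\tfrac12$ entirely; the paper lands on $16$ only because it pays the $\tfrac12$ and then recovers a factor $2$ by summing the contributions of $B_1$ and $B_2$. As written, your accounting does not actually establish the claimed constants.

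Two smaller issues. First, your opening claim that the expected allocation of $\mooo$ ``coincides with the divisible allocation'' is false across agents: the rounding perturbs the residual supply $k'$ seen by later agents, which is exactly why Lemma \ref{lm:rounding} costs a factor $2$ rather than nothing; only the per-agent conditional identity $\E[x_i]=\tilde x_i$ holds, and that is what the truthfulness argument actually needs. Second, all prices in the performance phase are computed on sets excluding the pivot agent $j$, and Observation \ref{obs:Partition} partitions $N'=N-j$, not $N$; since $j$ may carry a constant fraction of the optimal revenue, you need Lemma \ref{lm:OPTtToOPTRelationRev} (or the $\epsilon\to 0$ limit) to pass from $OPT(N-j,m)$ back to $OPT(N,m)$. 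Your argument benchmarks directly against $OPT(N,m)$ and silently skips this step.
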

In what follows we prove the theorem for the allocation of indivisible items. 
It is not difficult to observe that proving $32$-approximation for indivisible items directly implies 
$16$-approximation for divisible items. This is because our analysis goes through the allocation of divisible 
items and then applies Lemma \ref{lm:rounding} to show that the transition to indivisible items loses at most a 
factor $2$ in the approximation. 
% % % % % % % % % % % % % % % % % % % % % % % % % % % % % % % % % % % % % % % % % % % % % % % % % % % % % % % % % %
\subsection{Analysis}
\begin{theorem} \label{thm:Truthfulness}
	Mechanism $\mo$ is truthful, i.e., each agent maximizes her
	utility by reporting her true type.
\end{theorem}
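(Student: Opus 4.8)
The plan is to show that no agent can increase her expected utility by misreporting any component of her type, namely her value $v_i$, budget $b_i$, arrival time $\hat{a}_i$, or departure time $\hat{d}_i$. I would organize the argument by first fixing an arbitrary agent $i$ and conditioning on the coin tosses that determine the partition sizes and the random permutation of the \emph{other} agents, so that the only randomness left is where agent $i$'s report places her (and how her own lottery resolves). The key structural fact, established by Observation~\ref{obs:Partition}, is that any non-pivotal agent lands in each of $A_1$, $A_2-j$, $B_1$, $B_2$ with equal probability, and crucially the price agent $i$ faces is always computed from a set that \emph{excludes} her: in the sampling phase her price comes from the opposite set, and in the revenue phase her price $\pay(A_2-j)$ or $\pay(A_1)$ depends only on $A$, which is fixed once $t_0$ is passed. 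This price-independence is what makes misreporting $v_i$ and $b_i$ tractable.

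Next I would dispatch the value and budget manipulations. Because the relevant prices do not depend on agent $i$'s report, her allocation under $\moo/\mooo$ is determined by the single threshold test $v_{\pi(i)}\ge p$ together with the budget-limited quantity $\min\{b_i/p,\,k'\}$. Reporting a lower value risks failing the threshold (losing a profitable allocation), while reporting a higher value never increases the quantity she receives at a price she would not have accepted; since items are only allocated when $v_i \ge p$ and charged at $\tilde{x}_i\cdot p \le b_i$, by~\eqref{ValuationFunction} each allocated item yields nonnegative marginal utility $v_i - p$, so truthful value reporting is optimal. For the budget, overreporting $b_i$ can only force a payment exceeding the true budget (driving utility to $-\infty$), while underreporting only shrinks the affordable quantity $b_i/p$ and hence weakly lowers utility; here I would lean on the monotonicity in the budget that is implicit in the allocation rule.

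The delicate part is the \emph{temporal} manipulations, and I expect this to be the main obstacle. The mechanism is explicitly engineered so that an agent in the performance set $B$ is treated \emph{as if} she were appended to the corresponding sampling subset: the call $\alloc(\{A_1\cup i\},\pay(A_2-j),\tfrac{m}{4})$ followed by the cap $x_i\gets\min\{x_i',m_1\}$ guarantees she receives no more than she would have in $A_1$. Combining this with Observation~\ref{obs:Monotonicity}, which gives $\E_\moo[x_i^{T}]\ge \E_\moo[x_i^{S}]$ for $T\subseteq S$, one sees that an agent cannot gain items by delaying her arrival to slip from the sampling set into the performance set, since the performance-set allocation is dominated by the sampling-set allocation at the same price. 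I would therefore argue: (i) reporting $\hat{a}_i > a_i$ can only move her later in the arrival order or across the $t_0$ boundary, and the ``as-if-in-$A$'' rule together with monotonicity caps any benefit; (ii) reporting $\hat{a}_i < a_i$ is disallowed or neutralized by the tie-breaking convention (agents cannot bid before their true arrival), and in any case risks departing before $t_0$ and forfeiting the allocation entirely; (iii) misreporting $\hat{d}_i$ only affects whether she is present at the resolution time, and reporting $\hat{d}_i \ge d_i$ never helps because an agent allocated after her true departure derives no value within her real time frame.

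Finally I would assemble these cases: since the prices agent $i$ faces are independent of her report, every allocated item is charged at most her budget and yields marginal value $v_i - p \ge 0$, and neither a value, budget, arrival, nor departure deviation can simultaneously increase her allocation of profitable items without violating feasibility or the affordability constraint. Taking expectations over the partition randomness (using Observation~\ref{obs:Partition} to assert that the distribution over which subset she joins is itself unaffected by her report, as long as her reported arrival order is truthful) yields that truthful reporting maximizes her expected utility, which is the claim of the theorem.
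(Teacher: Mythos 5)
Your overall structure matches the paper's proof: a case analysis over the four components of the type, with the crux being the observation that prices are computed from sets excluding the agent, and that the ``as-if-in-$A$'' rule (allocating a $B_1$ agent via $\alloc(\{A_1\cup i\},\pay(A_2-j),\tfrac{m}{4})$ and capping by the remaining supply) kills the incentive to delay from the sampling set into the performance set. That is exactly the paper's argument for the delicate $A\!\to\!B$ deviation, and your treatment of value, budget, and departure misreports is essentially the paper's.

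There is, however, one genuine gap: your handling of \emph{earlier} arrival reports. You dismiss $\hat{a}_i<a_i$ as ``disallowed or neutralized by the tie-breaking convention,'' but in the main model agents explicitly \emph{can} report earlier arrival times (the no-early-bidding restriction and the randomized tie-breaking appear only in the appendix extension to non-distinct arrivals). The paper's proof must and does handle this case, and the argument is not the one you fall back on (``risks departing before $t_0$''); rather, it is the time-frame argument from \eqref{ValuationFunction}: an agent who reports $\hat{a}_i<a_i$ and is thereby allocated at her reported arrival (if she lands in $B$) or at $t_0<a_i$ (if she lands in $A$) receives the items \emph{outside} her true time frame, hence derives no value while still being charged. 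You invoke precisely this reasoning for departure-time deviations, so the repair is immediate, but as written the earlier-arrival case fails. Two smaller points: the domination in the $A\!\to\!B$ case comes from the allocation being computed on the very same set $\{A_1\cup i\}$ she would have faced in $A_1$, at the same price, plus the cap $\min\{x_i',m_1\}$ --- Observation~\ref{obs:Monotonicity} is not really the operative tool there (the paper uses it only in the revenue analysis); and you should say explicitly why reordering \emph{within} $A$ is harmless, namely that the sampling-phase allocation uses a fresh random permutation of $A_1$ (resp.\ $A_2$) rather than the arrival order, and includes already-departed agents, so shifting one's position in $A$ changes nothing.
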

\begin{proof}
	Let $i$ be any agent. We prove that the utility of agent $i$
	can only decrease if she misreports her type. \\\\
	\textbf{Misreporting $\bm{v_i}$, $\bm{b_i}$:} Agents in set $A$ and $B$ cannot
	change their price or position by misreporting their value or budget.
	Since given a fixed price and a position the mechanism allocates the best possible
	allocation, agents do not have the incentive to misreport their bids.\\\\
	\textbf{Misreporting $\bm{d_i}$:} Agents in $B$ are being allocated upon
	arrival. Hence, misreporting $d_i$ has no effect. Agents in $A$ are
	being allocated at time $t_{0}$. Therefore, by
	reporting an earlier departure time, agent $i$ might depart before
	time $t_{0}$. By delaying her departure time,
	agent $i$ might receive the items outside her time frame.\\\\
	{\bf Misreporting $\bm{a_i}$:} We analyze the following cases.		
	\begin{enumerate}
		\item The mechanism places agent $i$ in set $B$ according to both her true arrival
		time and reported arrival time. Recall that agents in $B$ are being
		allocated upon arrival. By reporting an earlier arrival time, agent
		$i$ receives the items outside her time frame. By delaying her
		arrival time, the agent might receive less items (since other agents might be allocated before agent $i$).
		\item The mechanism places agent $i$ in set $B$ according to her true arrival
		time and in set $A$ according to her reported arrival time.
		Since agent $i$ is placed in set $B$ according to her true arrival time, $a_i > t_{0}$.
		Hence, when agent $i$ is placed in set $A$ she receives the items outside her time frame.
		
		\item The mechanism places agent $i$ in set $A$ according to both her true arrival
		time and reported arrival time. Since agent $i$ is placed in set
		$A$, by misreporting her arrival time she can only affect the
		arrival order and the time $t_{0}$. Observe that the mechanism allocates the agents
		in a random order and takes into consideration
		departed agents (although they do not receive items).
		
		\item The mechanism places agent $i$ in set $A$ according to
		her true arrival time and in set $B$ according to her reported
		arrival time. Note that by moving to set $B$, agent $i$ caused
		agent $j$ to move to set $A$. Let $(x,\tilde{x})$ be the tuple returned by
		$M_1 = \mooo\left(\{A_1 \cup i\},p^*_{A_2 - j,\frac{m}{4}},\frac{m}{4}\right)$. Let
		$(x',\tilde{x}')$ be the tuple returned by	
		$M_2 = \mooo\left(\{A_2 \cup i\} - j,p^*_{A_1,\frac{m}{4}},\frac{m}{4}\right)$. If
		agent $i$ had been reporting her true arrival time, she would have
		been placed in set $A_1$ with probability $\frac{1}{2}$, and in set
		$A_2$ with probability $\frac{1}{2}$.
		
		Note that $\E_{M_1}[x_i]=\E_{M_1}[\tilde{x}_i]$ and $\E_{M_1}[x'_i]=\E_{M_1}[\tilde{x'}_i]$. Therefore, by reporting her true arrival time, with probability
		$\frac{1}{2}$, agent $i$ receives an expected amount of 
		$\E_{M_1}[x_i]$ items at a price of $\E_{M_1}[x_i] \cdot p_{A_2
			- j}$, and with probability $\frac{1}{2}$, agent $i$ receives an expected amount of $\E_{M_2}[x'_i]$ items at a price of
		$\E_{M_2}[x'_i] \cdot p_{A_1}$. Since agent $i$ misreported her
		arrival time, she is placed in set $B_1$ with probability
		$\frac{1}{2}$ and placed in set $B_2$ with probability of
		$\frac{1}{2}$. Hence, with probability $\frac{1}{2}$, agent $i$ receive 
		an expected amount of $x \leq \E_{M_1}[x_i]$ items at an expected
		price of $x \cdot p_{A_2 - j}$, and with probability $\frac{1}{2}$,
		agent $i$ receives an expected amount of $x \leq \E_{M_2}[x'_i]$ items at an expected price of $x \cdot p_{A_1}$ (recall that for set $B$
		the allocation is bounded by the remaining amount of the item). It is clear that the expected
		utility of agent $i$ can only decrease by misreporting her arrival
		time.
	\end{enumerate}
\end{proof}
Recall that $N' = N - j$. Let $OPT^* = OPT^*(N,m)$, $OPT = OPT(N,m)$, $\epsilon =
\epsilon(N,m)$, $OPT' = OPT(N',m)$, $\epsilon' = \epsilon(N',m)$.
The following lemma shows that there is no significant
difference between $OPT'$ and $OPT$:
\begin{lemma} \label{lm:OPTtToOPTRelationRev}
$\Pr\left[OPT'\geq OPT/2 \right]\geq 1-1/n$.
\end{lemma}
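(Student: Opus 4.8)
The plan is to reduce the statement to a purely combinatorial fact about the multiset of pairs $\{(v_i,b_i)\}$: deleting agent $j$ is the same as deleting a \emph{uniformly random} pair, and at most one pair is ``harmful'' in the sense that its deletion drops the optimal uniform-price revenue below $OPT/2$. Note first that $OPT = OPT(N,m)$ depends only on the multiset of pairs, not on the arrival order or on the mechanism's coins. The index $j$ is a function of the $n-1$ coin flips alone, hence independent of the random matching $\pi$ that assigns pairs to arrival positions. Therefore, for every fixed value of $j$ the pair occupying arrival position $j$ is $\pi(j)$, which is uniform over the $n$ pairs; averaging over $j$ keeps it uniform. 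Thus $N'$ is obtained from $N$ by deleting a uniformly random pair.

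Next I would fix the optimal uniform price $p^* = p^*_{N,m}$ for $N$, set $W = \{i : v_i \geq p^*\}$ and $T = \sum_{i \in W} b_i$, so that $OPT = \min(T, m p^*)$. Call a pair $\ell$ \emph{harmful} if $OPT(N \setminus \{\ell\}, m) < OPT/2$. Since $OPT(N \setminus \{\ell\}, m)$ is a maximum over prices, I can lower bound it by the revenue obtained at the fixed price $p^*$ on $N \setminus \{\ell\}$, namely $\min(\sum_{i \in W \setminus \{\ell\}} b_i,\, m p^*)$. For $\ell \notin W$ this equals $\min(T, m p^*) = OPT$, so no non-winner is harmful; only winners can be harmful, and for a winner $\ell$ the bound reads $\min(T - b_\ell,\, m p^*)$.

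I then bound the number of harmful winners by one, splitting on which constraint is tight. If $T \geq m p^*$ (so $OPT = m p^*$), a harmful winner needs $T - b_\ell < m p^*/2$, i.e. $b_\ell > T - m p^*/2$; two such winners would give $b_{\ell_1} + b_{\ell_2} > 2T - m p^* \geq T$, contradicting $b_{\ell_1} + b_{\ell_2} \leq T$. If $T < m p^*$ (so $OPT = T$), a harmful winner needs $T - b_\ell < T/2$, i.e. $b_\ell > T/2$, and again two such winners would exceed $T$. Either way at most one pair is harmful, and since the deleted pair is uniform, $\Pr[OPT' < OPT/2] \leq 1/n$, which is the claim.

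The point that must be handled with care is the case analysis in the previous paragraph: the tempting estimate $OPT' \geq OPT - b_\ell$ is useless in exactly the large-budget regime, since it can go negative while the item cap $m$ keeps $OPT'$ large. The crux is therefore to use the tightness of whichever constraint \emph{defines} $OPT$ to force at most one harmful pair, rather than reasoning agent-by-agent. A secondary subtlety is verifying that the deleted pair is genuinely uniform despite $j$ being produced by coin flips rather than chosen uniformly; this follows from the independence of the coins and $\pi$ noted above.
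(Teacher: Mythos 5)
Your proposal is correct and takes essentially the same route as the paper: the paper's (much terser) proof also observes that at most one agent can be responsible for more than half of $OPT$ and that the excluded agent $j$ corresponds to a uniformly random type-pair. Your case split on whether the budget total or the supply cap defines $OPT$ just makes explicit the detail the paper leaves implicit when it speaks of an agent's ``contribution'' to the optimal revenue.
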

\begin{proof}
	There can be at most a single agent that can contribute more then
	$OPT/2$ to the optimal revenue from set $N$. Since agent $j$ is
	chosen uniformly at random from set $N$ (recall that the agents arrive in a random order), 
	the probability that such an agent (if exists) is not in $N - j$ is $1/n$.
\end{proof}
Our analysis uses theorems from \cite{borgs2005multi}. Their mechanism, which we will refer to as $\mof$ is based on the random sampling paradigm. The mechanism divides the agents into two sets $S_1$ and $S_2$ uniformly at random. Then it calculates the optimal uniform price for selling at most $k/2$ items to agents in sets $S_1$ and $S_2$, denoted by $p_1$ and $p_2$ respectively. As a final step, the mechanism sells $k/2$ items to agents in $S_1$ at a price per item $p_2$, and $k/2$ items to agents in $S_2$ at a price per item $p_1$. For a formal description see Figure \ref{fig:offline} at Appendix \ref{appsec:Omit_Mech}.

Let $\overline{\epsilon} = \epsilon(S,k)$.
 Borgs et al. showed the following:
\begin{theorem}\footnote{Note that this is stronger than the statement of Lemma 5.2 in \cite{borgs2005multi},
		but this is essentially what they proved.} 
	\cite{borgs2005multi} \label{thm:highrev}
    Let $r_{S_1}$ be the revenue from selling at most $k/2$ items to set $S_1$ at a price of $p^*_{S,k}$.
    Let $r_{S_2}$ be the revenue from selling at most $k/2$ items to set $S_2$ at a price of $p^*_{S,k}$.
    For any $\delta\in [0,1]$, the probability that both $r_{S_1} \geq \frac{1-\delta}{2}OPT(S,k)$ and
    $r_{S_2} \geq \frac{1-\delta}{2}OPT(S,k)$ is at least $1-2e^{-\delta^2/(4\overline{\epsilon})}$.
\end{theorem}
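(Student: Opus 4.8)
The plan is to reduce the claim to a single lower-tail concentration bound on a budget-weighted sum, applied to the random partition. Write $p^* = p^*_{S,k}$ and let $W = \{i \in S : v_i \ge p^*\}$ be the agents willing to buy at price $p^*$, with total budget $B_W = \sum_{i \in W} b_i$. By definition $OPT(S,k) = \min(B_W/p^*,\, k)\cdot p^*$, so in particular $OPT(S,k) \le B_W$; this elementary inequality is what lets me treat the two regimes (namely $B_W/p^* \le k$ and $B_W/p^* > k$) uniformly. In the random-sampling partition each agent lands in $S_1$ independently with probability $1/2$, so the indicators $\chi_i = \mathbf{1}[i \in S_1]$ for $i \in W$ are independent $\mathrm{Bernoulli}(1/2)$. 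Writing $D_1 = \sum_{i \in W}\chi_i b_i/p^*$ for the number of items $S_1$ demands at price $p^*$, selling at most $k/2$ items to $S_1$ yields $r_{S_1} = \min(D_1, k/2)\cdot p^*$, and I set $Y = p^* D_1 = \sum_{i \in W}\chi_i b_i$, which has $\E[Y] = B_W/2$.

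The key reduction is that the event $\{Y \ge \tfrac{1-\delta}{2}OPT(S,k)\}$ already forces $\{r_{S_1} \ge \tfrac{1-\delta}{2}OPT(S,k)\}$. Indeed, if $D_1 \le k/2$ then $r_{S_1} = p^* D_1 = Y$, so the bound transfers directly; and if $D_1 > k/2$ then $r_{S_1} = (k/2)p^* \ge \tfrac{1-\delta}{2}\,k p^* \ge \tfrac{1-\delta}{2}OPT(S,k)$, using $OPT(S,k) \le k p^*$ and $\delta \ge 0$. Hence it suffices to show that $Y$ is large with the required probability.

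Now I would apply the multiplicative Chernoff lower tail. Since $\tfrac{1-\delta}{2}OPT(S,k) \le \tfrac{1-\delta}{2}B_W = (1-\delta)\E[Y]$ (this is where $OPT(S,k) \le B_W$ enters), the bad event $\{Y < \tfrac{1-\delta}{2}OPT(S,k)\}$ is contained in $\{Y < (1-\delta)\E[Y]\}$. Normalizing each summand by $b^{\max}_S$ so that $\chi_i b_i/b^{\max}_S \in [0,1]$, the Chernoff bound gives $\Pr[Y < (1-\delta)\E[Y]] \le \exp(-\delta^2\E[Y]/(2b^{\max}_S)) = \exp(-\delta^2 B_W/(4 b^{\max}_S))$, and using $B_W \ge OPT(S,k)$ together with $\overline\epsilon = b^{\max}_S/OPT(S,k)$ this is at most $\exp(-\delta^2/(4\overline\epsilon))$. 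The identical argument applies to $S_2$, and a union bound over the two complementary events yields probability at least $1 - 2\exp(-\delta^2/(4\overline\epsilon))$, as claimed.

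The one step that needs care is the reduction above: concentration controls the budget-weighted sum $Y$, not $r_{S_1}$ itself, so I must argue that capping demand at $k/2$ never loses against the $\tfrac{1-\delta}{2}OPT$ target --- which is exactly where the uniform inequalities $OPT(S,k) \le B_W$ and $OPT(S,k) \le k p^*$ do the work. The remaining Chernoff computation is routine, the only thing to verify being that the summands lie in $[0,1]$ after normalization, so that the multiplicative bound, which produces precisely the factor $4$ in the exponent, is applicable.
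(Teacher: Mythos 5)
Your proof is correct: the reduction from $r_{S_1}$ to the budget-weighted sum $Y$ via the two cases $D_1\le k/2$ and $D_1>k/2$ (using $OPT(S,k)\le kp^*$ and $OPT(S,k)\le B_W$) is sound, and the normalized multiplicative Chernoff bound together with $\overline{\epsilon}=b^{\max}_S/OPT(S,k)$ gives exactly the claimed exponent $\delta^2/(4\overline{\epsilon})$ and the factor $2$ from the union bound. The paper itself supplies no proof of this statement --- it imports it from Borgs et al.\ with a footnote --- and your argument is essentially the one given there, namely lower-tail concentration of the total budget of agents with $v_i\ge p^*_{S,k}$ landing on each side of the random partition.
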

\begin{theorem}\footnote{Note that this result is shown in proof of Theorem 5.1 in \cite{borgs2005multi}.} \cite{borgs2005multi} \label{thm:budget-offline}
	Let $r_{S_1}$ be the revenue of mechanism $\mof$ from agents in set $S_1$.
	Then, for any $\delta\in [0,\frac{1}{3}]$, the probability that $r_{S_1} \geq \frac{1-3\delta}{2}OPT(S,k)$ is at least $1-2e^{-\delta^2/(4\overline{\epsilon})}$.
\end{theorem}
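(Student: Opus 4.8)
The plan is to deduce this statement from Theorem~\ref{thm:highrev} together with the optimality of the cross price, so that no new randomness (hence no new failure probability) is introduced. First I would rewrite revenue in a convenient form: for a set $T$ and price $p$ let $W_T(p)=\sum_{i\in T,\,v_i\ge p}b_i$ be the total budget of the agents in $T$ willing to buy at $p$, so that the revenue from selling at most $k'$ items to $T$ at price $p$ is exactly $\min\lp W_T(p),\,k'p\rp$, and the partition yields the exact identity $W_S(p)=W_{S_1}(p)+W_{S_2}(p)$ at every price. By definition $\mof$ sells to $S_1$ at the price $p_2=p^*_{S_2,k/2}$ that is optimal for $S_2$, so $r_{S_1}=\min\lp W_{S_1}(p_2),\,\tfrac k2 p_2\rp$, and the candidate optimal prices may be restricted to the finitely many values $\{v_i\}$.

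Next I would condition on the event $E$ of Theorem~\ref{thm:highrev}, which holds with probability at least $1-2e^{-\delta^2/(4\overline{\epsilon})}$ and guarantees that each half earns at least $\frac{1-\delta}{2}OPT(S,k)$ at the global optimal price $p^*=p^*_{S,k}$. Since $p_2$ maximizes $S_2$'s revenue over all prices, on $E$ we get $\min\lp W_{S_2}(p_2),\tfrac k2 p_2\rp=OPT(S_2,k/2)\ge \min\lp W_{S_2}(p^*),\tfrac k2 p^*\rp\ge\frac{1-\delta}{2}OPT(S,k)$; because a minimum of two quantities is large only if both are, this already gives both $\tfrac k2 p_2\ge\frac{1-\delta}{2}OPT(S,k)$ and $W_{S_2}(p_2)\ge\frac{1-\delta}{2}OPT(S,k)$. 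It then remains to transfer the bound to $S_1$ at the same price $p_2$: using a demand-concentration bound of the form $W_{S_1}(p_2)\ge\frac{1-\delta}{2}W_S(p_2)$ (established in the next step) together with the additivity $W_{S_2}(p_2)=W_S(p_2)-W_{S_1}(p_2)\le\frac{1+\delta}{2}W_S(p_2)$, one obtains $W_{S_1}(p_2)\ge\frac{1-\delta}{1+\delta}W_{S_2}(p_2)\ge(1-2\delta)W_{S_2}(p_2)$. Combining, $W_{S_1}(p_2)\ge(1-2\delta)\frac{1-\delta}{2}OPT(S,k)\ge\frac{1-3\delta}{2}OPT(S,k)$ and $\tfrac k2 p_2\ge\frac{1-3\delta}{2}OPT(S,k)$, whence $r_{S_1}=\min\lp W_{S_1}(p_2),\tfrac k2 p_2\rp\ge\frac{1-3\delta}{2}OPT(S,k)$; here the hypothesis $\delta\le\frac13$ is exactly what makes $\frac{1-\delta}{1+\delta}\ge1-2\delta$ usable, and the extra $2\delta$ slack over Theorem~\ref{thm:highrev} is precisely the cost of this cross-half transfer.

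The step I expect to be the main obstacle is justifying the single inequality $W_{S_1}(p_2)\ge\frac{1-\delta}{2}W_S(p_2)$, because the price $p_2$ is computed from $S_2=S\setminus S_1$ and is therefore statistically dependent on $S_1$, so one cannot simply invoke a fixed-price Chernoff/Hoeffding bound at $p_2$. The resolution I would pursue avoids paying a union bound over all candidate prices: since $W_{S_1}(\cdot)$ and $W_S(\cdot)$ are non-increasing step functions and $W_{S_1}=W_S-W_{S_2}$ holds identically, I would anchor the concentration at the \emph{deterministic} price $p^*$ (where each summand $b_i$ is at most $b^{\max}_S=\overline{\epsilon}\,OPT(S,k)$, which is exactly what produces the $e^{-\delta^2/(4\overline{\epsilon})}$ tail) and propagate it to $p_2$ by monotonicity, splitting into the cases $p_2\le p^*$ and $p_2>p^*$. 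This keeps the entire argument on the single event $E$, so the failure probability remains $2e^{-\delta^2/(4\overline{\epsilon})}$ as claimed, and the remaining handling of the two $\min$ regimes (item-limited versus budget-limited) is routine bookkeeping.
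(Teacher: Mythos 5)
First, a framing point: the paper does not actually prove this statement---it is imported verbatim from the proof of Theorem 5.1 in \cite{borgs2005multi} (see the attached footnote)---so there is no in-paper argument to compare against, and your proposal must stand on its own. Its skeleton (reduce everything to $W_{S_1}(p_2)\geq\frac{1-\delta}{2}W_S(p_2)$ and do the arithmetic) is reasonable, and you correctly identify that inequality as the crux, but the resolution you sketch for it fails. Both $W_{S_1}(\cdot)$ and $W_S(\cdot)$ are non-increasing, so a ratio bound $W_{S_1}(p^*)\geq\frac{1-\delta}{2}W_S(p^*)$ at the deterministic anchor price transfers to neither side of $p^*$ by monotonicity: for $p_2<p^*$ the denominator grows ($W_S(p_2)\geq W_S(p^*)$) while your lower bound on the numerator does not, and for $p_2>p^*$ the numerator shrinks ($W_{S_1}(p_2)\leq W_{S_1}(p^*)$) by an amount you cannot control. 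Concretely, in the case $p_2>p^*$ the entire budget mass of the agents with values in $[p^*,p_2)$ could lie in $S_1$ without violating the event of Theorem~\ref{thm:highrev}, driving $W_{S_1}(p_2)$ to zero; nothing in that event constrains how the set $\{i: p^*\leq v_i<p_2\}$ splits between the halves. (The case $p_2\leq p^*$ is in fact fine, but for a more direct reason than the one you give: there $W_{S_1}(p_2)\geq W_{S_1}(p^*)\geq \frac{1-\delta}{2}OPT(S,k)$ already, with no ratio at $p_2$ needed.)

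The missing ingredient is a concentration statement holding \emph{simultaneously for all prices}, i.e.\ for every prefix of the agents sorted by decreasing value; only such a uniform statement can be evaluated at the data-dependent price $p_2$ without a union bound over the $n$ candidate prices (a union bound would introduce a factor of $n$ and spoil the stated failure probability $2e^{-\delta^2/(4\overline{\epsilon})}$). This is what \cite{borgs2005multi} actually establish: ordering agents by decreasing value, the partial sums of $b_i\lp\mathbb{I}_i-\frac{1}{2}\rp$ (with $\mathbb{I}_i$ the indicator of $i\in S_1$) form a martingale whose increments are bounded by $\frac{1}{2}b^{\max}_{S}=\frac{1}{2}\overline{\epsilon}\,OPT(S,k)$, and a maximal Azuma--Hoeffding-type inequality controls the supremum of the deviation over all prefixes at once, at the cost of only the factor $2$ already present in the bound. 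If you replace your ``propagate by monotonicity'' step with this uniform martingale bound, the remaining bookkeeping in your write-up (extracting both bounds from the $\min$, the estimate $\frac{1-\delta}{1+\delta}\geq 1-2\delta$ for $\delta\leq\frac{1}{3}$, and the two regimes of the $\min$) does go through.
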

Mechanism $\mof$ sells items in an arbitrary order, in particular,
according to the online arrival order. Moreover,
according to Observation \ref{obs:Partition} (see Section \ref{sec:rs_base}), our mechanism partition the agents into sets in
an equivalent way to mechanism $\mof$. Therefore, we can use the theorems from \cite{borgs2005multi}
in our analysis. 

Let $N_1 = B_1 \cup A_{2}-j$ and $N_2 = B_2 \cup A_1$.
Let $OPT_1 = OPT(N_1,m/2)$, $\epsilon_1 = \epsilon(N_1,m/2)$, $OPT_2 = OPT(N_2,m/2)$
and $\epsilon_2 = \epsilon(N_2,m/2)$.
The following lemma proves that with high probability both sets receive a significant fraction of $N'$.
\begin{lemma} \label{lm:LargeEps}
$\Pr\left[OPT_1 \geq \frac{1}{4}OPT', OPT_2 \geq \frac{1}{4}OPT', \epsilon_1 \leq 4\epsilon',
\epsilon_2 \leq 4\epsilon'\right] \geq 1-2e^{-1/16\epsilon'}$.
\end{lemma}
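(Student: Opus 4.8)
The plan is to recognize that this lemma is an almost immediate consequence of Theorem \ref{thm:highrev}, applied to the partition of $N'$ induced by the mechanism. By Observation \ref{obs:Partition}, each agent of $N'$ lands uniformly and independently in exactly one of $A_1$, $A_2 - j$, $B_1$, $B_2$. Since $N_1 = B_1 \cup (A_2 - j)$ and $N_2 = B_2 \cup A_1$ partition $N'$, each agent of $N'$ belongs to $N_1$ with probability $1/2$ and to $N_2$ otherwise, independently across agents. Hence $(N_1, N_2)$ is precisely a uniformly random partition of $N'$, which is exactly the setup required by Theorem \ref{thm:highrev}.

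First I would invoke Theorem \ref{thm:highrev} with $S = N'$, $k = m$, $S_1 = N_1$, $S_2 = N_2$ (so that $\overline{\epsilon} = \epsilon'$) and $\delta = 1/2$. This choice is calibrated so that $\frac{1-\delta}{2} = \frac{1}{4}$ and $\delta^2/(4\epsilon') = 1/(16\epsilon')$, matching the constants in the statement. Writing $r_{N_1}, r_{N_2}$ for the revenue from selling at most $m/2$ items at the fixed price $p^*_{N',m}$ to $N_1$ and $N_2$ respectively, the theorem yields that with probability at least $1 - 2e^{-1/(16\epsilon')}$ both $r_{N_1} \geq \frac{1}{4}OPT'$ and $r_{N_2} \geq \frac{1}{4}OPT'$; call this event $\mathcal{E}$. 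Since $OPT_1 = OPT(N_1, m/2)$ is the revenue obtained at the \emph{optimal} uniform price for $N_1$, whereas $r_{N_1}$ uses the fixed price $p^*_{N',m}$, we have $OPT_1 \geq r_{N_1}$ and likewise $OPT_2 \geq r_{N_2}$. Thus on $\mathcal{E}$ the two revenue inequalities $OPT_1 \geq \frac{1}{4}OPT'$ and $OPT_2 \geq \frac{1}{4}OPT'$ hold.

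Finally, I would obtain the two $\epsilon$-bounds deterministically on the same event $\mathcal{E}$, with no further appeal to randomness. Because $N_1 \subseteq N'$, the maximum budget is monotone, $b^{\max}_{N_1} \leq b^{\max}_{N'}$, so on $\mathcal{E}$ we get $\epsilon_1 = b^{\max}_{N_1}/OPT_1 \leq b^{\max}_{N'}/(\tfrac{1}{4}OPT') = 4\epsilon'$, and symmetrically $\epsilon_2 \leq 4\epsilon'$. Hence all four events in the lemma are implied by $\mathcal{E}$, and the probability bound follows. The only point requiring care — and the closest thing to an obstacle — is that the $\epsilon$-bounds must be derived on the \emph{same} event rather than through a separate tail estimate (which would degrade the constant and introduce a union bound). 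This works precisely because the $\epsilon$-inequalities follow from the two $OPT$ lower bounds together with budget monotonicity, so no additional randomness is consumed and the bound $1 - 2e^{-1/(16\epsilon')}$ is preserved exactly.
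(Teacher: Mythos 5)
Your proof is correct and follows essentially the same route as the paper: invoke Observation \ref{obs:Partition} to see that $N_1, N_2$ form a uniformly sampled partition of $N'$, apply Theorem \ref{thm:highrev} with $\delta = 1/2$, $S = N'$, $k = m$, use $OPT_1 \geq r_{N_1}$ and $OPT_2 \geq r_{N_2}$, and then derive the $\epsilon$-bounds deterministically from budget monotonicity on the same event. Your explicit remark that the $\epsilon$-inequalities consume no additional randomness is a point the paper leaves implicit, but the argument is identical.
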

\begin{proof}
	Using Observation \ref{obs:Partition}, we get that $N_1$ and $N_2$ form a uniformly sampled partition of $N'$. Hence, we can apply
	Theorem \ref{thm:highrev} with $\delta = \frac{1}{2}$, $S = N'$,
	$S_1 = N_1$, $S_2 = N_2$, $k = m$. Therefore, with probability of
	$1-2e^{-1/16\epsilon'}$ we get that $OPT_1 \geq r_{N_1} \geq
	\frac{1}{4}OPT'$, $OPT_2 \geq r_{N_2} \geq \frac{1}{4}OPT'$. Since
	$b^{\max}_{N_1} \leq b^{\max}_{N'}$ and $b^{\max}_{N_2} \leq
	b^{\max}_{N'}$ we achieve the bounds for  $\epsilon_1$ and
	$\epsilon_2$ stated in the assertion of the lemma \big(recall that
	$\epsilon(S,k)=\frac{b^{\max}_{S}}{OPT(S,k)}$\big).
\end{proof}
Our mechanism, as opposed to offline random sampling based mechanisms, 
calculates the allocation for each agent in set $B_1$ as if she was in a random permutation in set
$A_1$ (and applies an analogous procedure for sets $B_2$ and $A_2 -
j$). This modification is vital for the truthfulness of the
mechanism.

The following lemma is used to show that we lose only a constant
factor due to this modification. Let $x$ be the allocation returned by $\moo(S,p,k)$.
$\mooa(S,p,k)$ (see Figure \ref{fig:M1} at Appendix \ref{appsec:Omit_Mech}) 
is a mechanism that allocates the items
according to $x$ and charges a price of $p$ per item. $\moob(S,p,k,z)$  (see Figure \ref{fig:M2} at Appendix \ref{appsec:Omit_Mech}) is a
mechanism that sells according to the modification described above,
i.e., it divides set $S$ into sets $S_1$ and $S_2$ such that
each agent has a probability of $z$ to be placed in set $S_1$ and a
probability of $1-z$ to be placed in set $S_2$. Then it sells the
items to agents in set $S_1$ in a random order at a fixed price $p$ per item.
The allocation for each agent is calculated as if the agent was in a
random permutation in set $S_2$.
\begin{lemma} \label{lm:Restriction}
	For every set $S$, price $p$, fraction of item $k$ and a probability
	$z$, the expected revenue from mechanism $\moob(S,p,k,z)$ is at least
	$z$ times the expected revenue from mechanism $\mooa(S,p,k)$.
\end{lemma}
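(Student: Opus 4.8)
The plan is to exploit that both mechanisms charge exactly $p$ per allocated item, so it suffices to compare the \emph{expected number of items} each one sells; the factor $p$ then cancels. First I would record that $\mooa(S,p,k)$ sells a deterministic total of $Q := \min\left\{k,\ \sum_{i\in S,\, v_i\ge p} b_i/p\right\}$ items, namely the full allocation of $\moo(S,p,k)$. Since $\moob$ assigns each agent to $S_1$ independently with probability $z$ and this partition is independent of the allocation randomness, $zQ$ is precisely the expected number of items $\mooa$ would sell to the agents landing in $S_1$; that is, $zQ = \E\left[\sum_{i\in S_1} x_i\right]$, where $x$ is the (random) $\moo(S,p,k)$ allocation. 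Hence the claim reduces to showing $\moob$ sells at least $zQ$ items in expectation.

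The heart of the argument is a coupling that upgrades Observation \ref{obs:Monotonicity} to a pointwise statement. Draw a single uniformly random permutation $\sigma$ of $S$ together with the partition $(S_1,S_2)$, independently. Run $\moo(S,p,k)$ along $\sigma$ to obtain $x$, and for each $i\in S_1$ compute its ``as-if-in-$S_2$'' allocation $x'_i$ using the order $\sigma$ induces on $S_2\cup\{i\}$ (a restriction of a uniform permutation is uniform, so this has the correct marginal). For a qualifying agent $i$, the agents consuming items before it are, in $\moo(S,p,k)$, all qualifying $\sigma$-predecessors, whereas in the $S_2\cup\{i\}$ computation they are only the qualifying $\sigma$-predecessors lying in $S_2$. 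As $S_2\subseteq S$, agent $i$ faces weakly less prior demand in the latter, giving $x'_i\ge x_i$ for every $i\in S_1$ simultaneously. Summing, $\sum_{i\in S_1} x'_i \ge \sum_{i\in S_1} x_i$, and since the right-hand side never exceeds the full $\moo(S,p,k)$ total $Q\le k$, the cap at $k$ on the $\moob$ side is harmless: the number of items $\moob$ sells, which equals $\min\left\{k,\sum_{i\in S_1} x'_i\right\}$, is at least $\min\left\{k,\sum_{i\in S_1}x_i\right\}=\sum_{i\in S_1}x_i$. Taking expectations yields $zQ$, as required.

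The step needing the most care, and the one I expect to be the main obstacle, is the interaction between the nonlinear cap $\min\{k,\cdot\}$ and the fact that the real $\moob$ resolves each agent's allocation with its \emph{own independent} random permutation rather than the single shared $\sigma$ used above. Both versions give every $x'_i$ the same marginal law, but they differ in the joint law of $\sum_{i\in S_1}x'_i$, and $\E[\min\{k,\cdot\}]$ of a sum is not determined by the marginals. I would resolve this by observing that the shared-permutation coupling makes the summands positively associated whereas the true mechanism makes them independent; since $\min\{k,\cdot\}$ is concave, passing from the positively-correlated sum to the independent one can only \emph{increase} $\E\left[\min\left\{k,\sum_{i\in S_1}x'_i\right\}\right]$ (a correlation/convex-order argument). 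Thus the bound proved for the coupled version transfers to the real mechanism, giving $\E[\text{items sold by }\moob]\ge zQ = z\cdot\E[\text{items sold by }\mooa]$, which after restoring the factor $p$ is the claim. An alternative that avoids convex-order machinery is a direct step-by-step comparison of the sequential sales: because in $\moob$ only $S_1$-agents ever draw from the pool of $k$ items while in $\mooa$ the $S_2$-agents deplete it as well, the cumulative amount sold to $S_1$ in $\moob$ dominates that in $\mooa$ throughout the processing order.
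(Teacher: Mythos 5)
Your overall skeleton matches the paper's: the factor $z$ comes from the independence of the partition from the allocation randomness, and the engine is the monotonicity of $\moo$ (Observation \ref{obs:Monotonicity}), which you correctly strengthen to a pointwise statement under a shared-permutation coupling. The genuine gap is exactly the step you flag yourself: transferring the bound from the coupled sum (all ``as-if-in-$S_2$'' allocations computed from one permutation) to the real mechanism's sum $\sum_{i\in S_1}x'_i$, whose summands are computed with independent permutations, under the concave cap $\min\{k,\cdot\}$. You assert that the coupled summands are positively associated and invoke a convex-order comparison, but neither half is established. Association is not obvious here: each coupled allocation is a decreasing function of a different ``predecessor set'' carved out of the same permutation, and these sets are neither independent nor monotone in any common lattice structure; moreover, the comparison theorem you need (positively dependent summands have a sum that dominates the independent sum in convex order) requires positive supermodular dependence of the whole vector, which does not follow from pairwise positive correlation. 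The fallback ``step-by-step'' argument is also not a proof: $\moob$ and $\mooa$ process the agents in incomparable orders, so ``throughout the processing order'' has no shared meaning, and an induction on the expected amount sold so far does not close, because the next increment $\min\{x'_{i_t},\,k-R_{t-1}\}$ depends on the distribution of the remaining supply $R_{t-1}$, not only on its mean.

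The paper closes this step with a much lighter two-case argument that you could substitute for the convex-order plan. Fix the partition. Either $\moob$ exhausts the $k$ items, in which case its revenue is $kp$, the absolute maximum extractable at price $p$, and in particular at least $\mooa$'s expected revenue from the agents of $S_1$ (which never exceeds $kp$, since $\mooa$ sells at most $k$ items in total); or no agent is ever capped, i.e.\ $\min\{x'_i,m\}=x'_i$ for every $i\in S_1$, so agent $i$ pays exactly $x'_i\cdot p$ with $x'_i$ distributed as her allocation in $\moo(S_2\cup\{i\},p,k)$, and Observation \ref{obs:Monotonicity} gives $\E[x'_i]\ge\E[\tilde{x}_i]$ agent by agent. (Strictly speaking, the paper's second case also takes an unconditional expectation inside a conditional event, so your instinct that something delicate happens at the cap is not misplaced; but the paper's route avoids the association machinery entirely.) As written, your proposal correctly identifies the obstacle but does not overcome it.
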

\begin{proof}
	Let $r$ be the expected revenue from mechanism $\mooa(S,p,k)$. For the
	sake of the analysis, assume that mechanism $\mooa(S,p,k)$ begins by
	placing each agent $i \in S$ in $S_1$ and $S_2$ with respective
	probability $z$ and $1-z$ (as mechanism $\moob(S,p,k,z)$
	does). Since each agent has a probability of $z$ to be placed in
	$S_1$, the expected revenue extracted from agents in $S_1$ is $z
	\cdot r$. Therefore, it is sufficient to show that for each partition
	of $S$ into $S_1$ and $S_2$, the expected revenue from $\moob(S,p,k,z)$
	equals to at least the expected revenue of $\mooa(S,p,k)$ from agents in set
	$S_1$. There are two cases. If $\moob(S,p,k,z)$ allocated the entire
	fraction of the item, then it collected the maximal revenue possible
	from price $p$ and the claim is obviously true.
	
	Otherwise, for each agent $i \in S_1$, $\min\{x'_i, m\} = x'_i$ at
	step \ref{alg: M2_Allocation} of mechanism $\moob$. We prove that for
	each such an agent $i$, her expected revenue in $\moob(S,p,k)$ (denoted
	by $r_{i,2}$) equal at least her expected revenue in $\mooa(S,p,k)$
	(denoted by $r_{i,1}$). Since $\min\{x'_i, m\} = x'_i$ then $r_{i,2}
	= \E_\moo[\bar{x}_i] \cdot p$ where $\bar{x}$ is the allocation returned
	by $\moo(\{S_2 \cup i\},p,k)$. $r_{i,1} = \E_\moo[\tilde{x}_i] \cdot p$
	where $\tilde{x}$ is the allocation returned by $\moo(S,p,k)$. Since
	$\{S_2 \cup i\}\subseteq S$, then according Observation \ref{obs:Monotonicity}
	$\E_\moo[\bar{x}_i] \geq \E_\moo[\tilde{x}_i]$. This complete the proof.
\end{proof}
We are now ready to prove the main technical lemma of this section.

\begin{lemma} \label{thm:OPTtConstantRev}
For every $\delta\in [0,\frac{1}{3}]$, the probability that the expected revenue obtained from mechanism $\mo(N,m)$
is greater than $\frac{(1-3\delta)OPT'}{16}$ is at least\\
$\left(1-2e^{-1/16\epsilon'}\right)\left(1-4e^{-\delta^2/(16\epsilon')}\right)$.
\end{lemma}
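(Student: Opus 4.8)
The plan is to lower bound the revenue collected in the revenue-collection phase from the two performance subsets $B_1$ and $B_2$ alone, since the sampling phase only adds revenue. By Observation \ref{obs:Partition} the four sets $A_1, A_2-j, B_1, B_2$ form a uniform random partition of $N'$, so $N_1 = B_1 \cup (A_2-j)$ and $N_2 = B_2 \cup A_1$ are a uniform bipartition of $N'$ and, within each $N_\ell$, its two halves are split uniformly. This is exactly the partition produced by the offline mechanism $\mof$ on each $N_\ell$, so the concentration bounds of \cite{borgs2005multi} apply. First I would condition on the event of Lemma \ref{lm:LargeEps}, which holds with probability at least $1 - 2e^{-1/16\epsilon'}$ and guarantees $OPT_1, OPT_2 \geq OPT'/4$ and $\epsilon_1, \epsilon_2 \leq 4\epsilon'$.

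Next I would treat each side separately. For $N_1$ I would apply Theorem \ref{thm:budget-offline} with $S = N_1$, $k = m/2$, $S_1 = B_1$, $S_2 = A_2-j$ and $\overline{\epsilon} = \epsilon_1 \leq 4\epsilon'$: with probability at least $1 - 2e^{-\delta^2/(4\epsilon_1)} \geq 1 - 2e^{-\delta^2/(16\epsilon')}$ the greedy revenue $\mooa(B_1, \payrev(A_2-j), m/4)$ from selling to $B_1$ at the sampled price is at least $\frac{1-3\delta}{2}OPT_1 \geq \frac{(1-3\delta)OPT'}{8}$. The symmetric statement holds for $N_2$ with $S_1 = B_2$, $S_2 = A_1$; a union bound gives both with probability at least $1 - 4e^{-\delta^2/(16\epsilon')}$, and multiplying by the probability of the event of Lemma \ref{lm:LargeEps} yields exactly the product stated in the lemma.

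It then remains to pass from this greedy benchmark to the revenue $\mo$ actually collects, which differs in two respects, each costing a factor of $2$. First, $\mo$ does not sell greedily within $B_1$: it allocates each agent $i \in B_1$ as if she were appended to $A_1$ (the modification that secures truthfulness). Invoking Lemma \ref{lm:Restriction} with $z = 1/2$ together with the monotonicity of $\moo$ (Observation \ref{obs:Monotonicity}, which gives $\mooa(A_1 \cup B_1, \cdot) \geq \mooa(B_1, \cdot)$) retains at least half of the greedy revenue. Second, $\mo$ uses the indivisible procedure $\mooo$ rather than $\moo$, and by Lemma \ref{lm:rounding} the expected charge $\sum_i \tilde{x}_i$ loses at most another factor of $2$ relative to the divisible allocation. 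Hence the expected revenue from $B_1$ is at least $\frac{1}{2}\cdot\frac{1}{2}\cdot\frac{(1-3\delta)OPT'}{8} = \frac{(1-3\delta)OPT'}{32}$, and likewise for $B_2$; summing the two sides gives expected revenue at least $\frac{(1-3\delta)OPT'}{16}$, matching the claim.

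The hard part will be the modification bound. The obstacle is that the pricing half $A_2-j$ (which fixes the price for $B_1$) lies in $N_1$, whereas the ``as if'' half $A_1$ (which determines each $B_1$ agent's allocation) lies in the other part $N_2$, so the two are coupled only through the global random partition. The factor $1/2$ of Lemma \ref{lm:Restriction} is an average over the uniform split of $A_1 \cup B_1$ into its halves, whereas the pricing guarantee of Theorem \ref{thm:budget-offline} is a high-probability statement over the split of $N_1$ into $B_1$ and $A_2-j$; since $B_1$ participates in both, I would have to arrange the conditioning so that both randomizations remain available at once, using the independence provided by Observation \ref{obs:Partition} between the as-if half $A_1 \subseteq N_2$ and the split that fixes the price inside $N_1$. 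Keeping straight over which randomness each of the two factor-$2$ losses is averaged is the step I expect to demand the most care.
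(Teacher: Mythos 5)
Your proposal is correct and follows essentially the same route as the paper: condition on the event of Lemma \ref{lm:LargeEps}, apply Theorem \ref{thm:budget-offline} to each of $N_1$ and $N_2$ with a union bound, then lose one factor of $2$ via Lemma \ref{lm:Restriction} (with $z=1/2$) and another via Lemma \ref{lm:rounding}, arriving at the same $\frac{(1-3\delta)OPT'}{16}$ bound. The coupling issue you flag at the end is dispatched in the paper by a one-line symmetry remark (the conditioning events for $N_1$ and $N_2$ are symmetric), and your only slight misstep is attributing the superset inequality $\mooa(A_1\cup B_1,\cdot)\geq \mooa(B_1,\cdot)$ to Observation \ref{obs:Monotonicity} (which concerns per-agent allocations and points the other way); the paper instead derives it directly from the fact that the qualifying agents of $B_1\cup A_1$ are a superset of those of $B_1$.
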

\begin{proof}
	We calculate the expected revenue only for the event that $OPT_1
	\geq \frac{1}{4}OPT'$ and $OPT_2 \geq \frac{1}{4}OPT'$ (and
	therefore also $\epsilon_1 \leq 4\epsilon'$, $\epsilon_2 \leq
	4\epsilon'$). Let $r$ be the expected revenue from mechanism
	$\mooa(B_1,p^*_{A_2 - j,\frac{m}{4}},\frac{m}{4})$. Note that each
	agent $i \in S_1$ is randomly and independently placed to either $B_1$
	or $A_2 - j$ with equal probability. Hence, we can apply Theorem
	\ref{thm:budget-offline} with $S_1 = B_1$, $S_2 = A_2 - j$ and get
	that the probability that $r \geq \frac{1-3\delta}{2}OPT_1 \geq
	\frac{1-3\delta}{8}OPT'$ is at least $1-2e^{-\delta^2/(4\epsilon_1)}
	\geq 1-2e^{-\delta^2/(16\epsilon')}$. Note that the expected revenue
	from mechanism $\mooa(B_1 \cup A_1,p^*_{A_2 -
		j,\frac{m}{4}},\frac{m}{4})$ is at least the expected revenue from
	mechanism $\mooa(B_1,p^*_{A_2 - j,\frac{m}{4}},\frac{m}{4})$, since
	the agents in set $B_1 \cup A_1$ with value at least $p^*_{A_2 -
		j,\frac{m}{4}}$ are a superset of the ones in set $B_1$. Hence, the
	probability that the expected revenue from mechanism $\mooa(B_1 \cup
	A_1,p^*_{A_2 - j,\frac{m}{4}},\frac{m}{4})$ is at least
	$\frac{1-3\delta}{8}OPT'$ is at least
	$1-2e^{-\delta^2/(16\epsilon')}$. 
	
	By applying Lemma
	\ref{lm:Restriction} with $S = B_1 \cup A_1$, $S_1 = B_1$, $S_2 =
	A_1$, $p = p^*_{A_2 - j,\frac{m}{4}}$, $k = \frac{m}{4}$ and $z =
	\frac{1}{2}$ we get the following. The probability that the expected
	revenue from $\moob(B_1 \cup A_1,p^*_{A_2 -
		j,\frac{m}{4}},\frac{m}{4}, \frac{1}{2})$ is greater than
	$\frac{1}{2}\cdot\frac{1-3\delta}{8}OPT' = \frac{1-3\delta}{16}OPT'$ is at least
	$1-2e^{-\delta^2/(16\epsilon')}$.
	By a symmetric argument we get that the
	probability that the expected revenue from mechanism $\moob(B_2 \cup
	A_2 - j,p^*_{A_1,\frac{m}{4}},\frac{m}{4}, \frac{1}{2})$ is greater than
	$\frac{1-3\delta}{16}OPT'$ is at least
	$1-2e^{-\delta^2/(16\epsilon')}$. Using a union bound we get that
	the probability that the expected revenue from mechanism $\moob(B1
	\cup A_1,p^*_{A_2 - j,\frac{m}{4}},\frac{m}{4}, \frac{1}{2})$ and
	mechanism $\moob(B_2 \cup A_2 - j,p^*_{A_1,\frac{m}{4}},\frac{m}{4},
	\frac{1}{2})$ is greater than $\frac{1-3\delta}{8}OPT'$ is at least $1-4e^{-\delta^2/(16\epsilon')}$. 
	Note that we can use Lemma \ref{lm:Restriction} although we enforce that
	$OPT_1 \geq \frac{1}{4}OPT'$ and $OPT_2 \geq \frac{1}{4}OPT'$
	(which implies restrictions on $N_1 = B_1 \cup A_{2 - j}$ and $N_2 = B_2 \cup A_1$), 
	since this conditions are symmetric.	
	
	One can easily observe that the expected revenue from mechanism $\modd(N, m)$ is at 
	least the expected revenue from mechanism $\moob(B1 \cup
	A_1,p^*_{A_2 - j,\frac{m}{4}},\frac{m}{4}, \frac{1}{2})$ and
	mechanism $\moob(B_2 \cup A_2 - j,p^*_{A_1,\frac{m}{4}},\frac{m}{4},
	\frac{1}{2})$. To bound the expected revenue from mechanism $\mo(N, m)$
	we apply Lemma \ref{lm:rounding} and lose an additional factor of $2$ in the revenue.
	Combining the above with Lemma \ref{lm:LargeEps}, we get the desired result.
\end{proof}
We now use Theorem \ref{thm:uniToArbitraryRev} and Lemma \ref{lm:OPTtToOPTRelationRev},  to bound the expected
revenue from mechanism $\mo$ with respect to $OPT^*$.

\begin{theorem} \label{coro:ConstantRev}
 For every $\delta\in [0,\frac{1}{3}]$, the probability that the expected revenue obtained from mechanism $\mo(N,m)$
 is greater than $\frac{(1-3\delta)OPT^*}{64}$ is at least\\
 $(1-1/n)\left(1-2e^{-1/32\epsilon}\right)\left(1-4e^{-\delta^2/(32\epsilon)}\right)$.
\end{theorem}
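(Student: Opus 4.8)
The plan is to chain the three facts already in hand. Lemma \ref{thm:OPTtConstantRev} bounds the expected revenue of $\mo(N,m)$ from below by $\frac{(1-3\delta)OPT'}{16}$ with a probability expressed in terms of $\epsilon'$; Lemma \ref{lm:OPTtToOPTRelationRev} relates $OPT'$ to $OPT$; and Theorem \ref{thm:uniToArbitraryRev} relates $OPT$ to $OPT^*$. I would begin by isolating the event $E = \{OPT' \geq OPT/2\}$, which by Lemma \ref{lm:OPTtToOPTRelationRev} has probability at least $1-1/n$ and, crucially, is determined entirely by the choice of the distinguished agent $j$ (equivalently, by the set $N'$), and not by the subsequent partition coins or the internal lotteries.

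On the event $E$ I would extract two consequences. For the revenue threshold, combining $OPT'\geq OPT/2$ with $OPT\geq OPT^*/2$ (Theorem \ref{thm:uniToArbitraryRev}) gives $OPT'\geq OPT^*/4$, so $\frac{(1-3\delta)OPT'}{16}\geq\frac{(1-3\delta)OPT^*}{64}$; hence whenever the conclusion of Lemma \ref{thm:OPTtConstantRev} holds on $E$, the expected revenue already exceeds the target $\frac{(1-3\delta)OPT^*}{64}$. For the parameter $\epsilon'$, since $N'\subseteq N$ we have $b^{\max}_{N'}\leq b^{\max}_{N}$, and together with $OPT'\geq OPT/2$ this yields $\epsilon'=b^{\max}_{N'}/OPT'\leq 2b^{\max}_N/OPT=2\epsilon$. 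Therefore $\frac{1}{16\epsilon'}\geq\frac{1}{32\epsilon}$ and $\frac{\delta^2}{16\epsilon'}\geq\frac{\delta^2}{32\epsilon}$, and since the probability bound of Lemma \ref{thm:OPTtConstantRev} increases as $\epsilon'$ decreases, it is at least $\left(1-2e^{-1/32\epsilon}\right)\left(1-4e^{-\delta^2/(32\epsilon)}\right)$ for every $N'$ consistent with $E$.

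The final step is a conditioning argument. Because $E$ is a function of $N'$ alone, I would expand the target probability as an average over the choice of $N'$: for each $N'$ satisfying $E$, the conditional probability (over the remaining randomness) that the expected revenue exceeds $\frac{(1-3\delta)OPT'}{16}$---and hence, by the previous paragraph, the target threshold---is at least $\left(1-2e^{-1/32\epsilon}\right)\left(1-4e^{-\delta^2/(32\epsilon)}\right)$. This factor no longer depends on $N'$, so it pulls out of the average, and what remains is $\sum_{N':\,E}\Pr[N']=\Pr[E]\geq 1-1/n$. Multiplying the two gives the claimed bound.

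The step I expect to be delicate is precisely this last conditioning: the guarantee of Lemma \ref{thm:OPTtConstantRev} is phrased in terms of the random quantities $OPT'$ and $\epsilon'$, so one cannot naively multiply the $1-1/n$ of Lemma \ref{lm:OPTtToOPTRelationRev} with the bound of Lemma \ref{thm:OPTtConstantRev}. The resolution is to note that $E$ and $\epsilon'$ depend only on $N'$, fix $N'$ first, replace $\epsilon'$ by its deterministic upper bound $2\epsilon$ (valid on $E$) inside the monotone probability expression, and only afterwards average over $N'$.
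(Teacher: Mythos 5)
Your proposal is correct and follows the same route as the paper: condition on the event $OPT'\geq OPT/2$ from Lemma \ref{lm:OPTtToOPTRelationRev}, use $\epsilon'\leq 2\epsilon$ and the monotonicity of the probability bound in Lemma \ref{thm:OPTtConstantRev}, and chain through Theorem \ref{thm:uniToArbitraryRev} to pass from $OPT'$ to $OPT^*$. The paper's proof is a one-line version of exactly this chain; your explicit treatment of the conditioning on $N'$ fills in the step the paper leaves implicit.
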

\begin{proof}
	According to Lemma \ref{lm:OPTtToOPTRelationRev}, $\Pr\left[OPT'\geq
	OPT/2 \right]\geq 1-1/n$. Combining  $OPT' \geq OPT/2$ and
	$\epsilon' \leq 2\epsilon$ with the bound from Lemma
	\ref{thm:OPTtConstantRev} and Theorem \ref{thm:uniToArbitraryRev}
	yields the desire result.
\end{proof}
We our now ready to show the main result of this section.\\\\
\textbf{Proof of Theorem \ref{thm:RevMain}:}
By definition, when $\epsilon$ tends to $0$, $OPT'$ approaches to $OPT$. 
Therefore, the bound from
Lemma \ref{thm:OPTtConstantRev} becomes: for every $\delta\in
[0,\frac{1}{3}]$, the probability that the expected revenue obtained from
mechanism $\mo(N,m)$ is greater than $\frac{(1-3\delta)OPT^*}{32}$ is
at least
$\left(1-2e^{-1/32\epsilon}\right)\left(1-4e^{-\delta^2/(32\epsilon)}\right)$.
Since $\epsilon$ tends to $0$, for every $\delta$,
$\left(1-2e^{-1/32\epsilon}\right)\left(1-4e^{-\delta^2/(32\epsilon)}\right)$
tends to $1$.\qedsymb\\\\
% % % % % % % % % % % % % % % % % % % % % % % % % % % % % % % % % % % % % % % % % % % % % % % % % % % % % % % % % %

% liquid %%%%%%%%%%%%%%%%%%%%%%%%%%%%%%%%%%%%%%%%%%%%%%%%%%%%%%%%%
\section{Liquid Welfare Maximization} \label{sec:liquid}

In the liquid welfare setting the seller has a single divisible item\footnote{By normalization, this is equivalent to $m$ divisible items.}.
Let $\tilde{v}_i\lp x_i\rp=\min\lp
v_ix_i, b_i\rp$. The liquid welfare from an allocation vector 
$x$ is $\tilde{W}\lp x\rp=\sum_i{\tilde{v}_i\lp
x_i\rp}$. Let $OPT$ denote the optimal liquid welfare. 
For an agent $i \in N$ we
define $\bar{v}_i = \min\{v_i,b_i\}$ to be the \emph{liquid value}
of the agent. We refer to an agent whose liquid value is a constant
fraction of $OPT$ as a ``dominant" agent.

Our mechanism, which we refer to as $\mool$, is composed of two mechanisms.
It runs mechanism $\moov(N)$ with probability $\mu$ and mechanism
$\moos$ with probability $1-\mu$, where the exact value of $\mu$ will be determined in the analysis.

We define $\moos = \mtempl(N, 1, \bar{P}, \moo)$,
where $\moo$ is as defined in Section \ref{sec:div_indiv}, and $\bar{P}$ is described as follows: 
Let $S$ be a set of $n'$ agents such that $v_{i_1} \geq v_{i_2} ... \geq
v_{i_{n'}}$. Let $k$ be the maximum integer such that $\sum_{j=1}^{k}
b_{i_j} \leq v_{i_{k}}$. The function $\bar{P}(S)$ is the market clearing price\footnote{
	Market clearing price is a term taken from the general
	equilibrium theory. In the special case of a songle type of good, 
 market clearing price is the maximum price satisfying
	that each agent can be assigned an optimal fraction of the good such
	that there is no deficiency or surplus.} defined as
follows: $\bar{P}(S) = \max\{\sum_{j=1}^{k}
b_{i_j},v_{i_{k+1}}\}$.

Mechanism $\moov(N)$, which is defined in Figure \ref{fig:mvcg}, achieves a constant approximation to the optimal liquid welfare if there is a single dominant agent, and
Mechanism $\moos$ achieves a constant approximation in all other cases. 
The main result of this section is:
\begin{comment}
In what follows, we give a brief description of Mechanism $\moov$:
Place the first $n/2$ agents in set $A$ and the rest in set $B$. Upon the arrival of the last agent in $A$, apply the modified VCG mechanism of \cite{lu2014liquid} on the agents in set $A$, and sets a threshold value, which is greater than the highest bid in this set. With probability 1/2, all the items are allocated to the winner in set $A$; otherwise, all the items are allocated to the first agent in $B$ that surpasses the threshold.
\end{comment}
\begin{theorem}\label{thm:liq_main}
Mechanism $\mool$ is a truthful mechanism for the allocation of a divisible item that gives a constant approximation to the optimal liquid welfare.
\end{theorem}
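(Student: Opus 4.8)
The plan is to prove the two required properties of $\mool$—truthfulness and a constant approximation to the optimal liquid welfare—separately, and then to balance the two regimes through the mixing probability $\mu$.

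For truthfulness, since $\mool$ runs $\moov$ with probability $\mu$ and $\moos$ with probability $1-\mu$ and both are individually rational, it suffices to show that each component is truthful: an agent's expected utility under $\mool$ is the $\mu$-weighted average of her expected utilities in the two mechanisms, so if truth-telling weakly maximizes each term—and never triggers the $-\infty$ budget-violation penalty—then it maximizes the average. For $\moos$, which is an instance of $\mtempl$ with the divisible allocation $\moo$, I would reuse the argument of Theorem \ref{thm:Truthfulness} essentially verbatim: misreporting value or budget cannot change an agent's set membership or her price, and given a fixed price and position $\moo$ returns the best affordable allocation; misreporting departure time only risks being allocated outside the time frame; and the ``treat a $B$-agent as if in $A$'' construction, combined with the monotonicity of $\moo$ (Observation \ref{obs:Monotonicity}), removes any incentive to manipulate the arrival time. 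For $\moov$ I would argue separately for the two sets: agents in $A$ face the online adaptation of the modified VCG of \cite{lu2014liquid}, which is truthful, while agents in $B$ face a single posted threshold price (take-it-or-leave-it for the whole item), which is trivially truthful; since the threshold and the fair coin that decides between $A$ and $B$ are both fixed at the arrival time of the last $A$-agent and are independent of any single later report, no agent can profit by shifting her arrival or departure time.

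For the approximation I would fix a constant $c$, call an agent \emph{dominant} if $\bar{v}_i \geq c\cdot OPT$, and split into two cases. When a dominant agent exists, I would show $\moov$ alone extracts $\Omega(OPT)$. The dominant agent lands in $A$ or $B$ each with probability $\approx 1/2$; the modified VCG on $A$ produces a threshold above the liquid value of every non-dominant agent but below that of the dominant one, which is exactly a single-item secretary/posted-price situation for the liquid value. If the dominant agent is in $A$, the modified VCG of \cite{lu2014liquid} already captures a constant fraction of $\bar{v}_i$; if she is in $B$, she is (with constant probability) the first agent to surpass the threshold and receives the whole item, capturing $\min\{v_i,b_i\}=\bar{v}_i=\Omega(OPT)$. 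When no agent is dominant, every liquid value is an $O(c)$ fraction of $OPT$; this is precisely a ``large market'' for liquid welfare, and I would show $\moos$ captures a constant fraction by the random-sampling analysis used for revenue. Here the market-clearing price $\bar{P}$ plays the role of the revenue-optimal uniform price, and the partition guarantee of Observation \ref{obs:Partition} together with concentration bounds of the type in Theorem \ref{thm:highrev} and Lemma \ref{lm:LargeEps} (now with $\overline{\epsilon}=O(c)$, since no agent is dominant) show both halves clear $\Omega(OPT)$ liquid welfare, while Lemma \ref{lm:Restriction} accounts for the constant loss from the ``as-if-in-$A$'' charging. Choosing $\mu$ to be a fixed constant that balances the two cases then yields a constant approximation overall.

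The main obstacle will be the approximation analysis rather than truthfulness. Within it, the delicate part is the dominant-agent case for $\moov$: I must verify that the threshold extracted from the modified VCG on $A$ is simultaneously low enough to be surpassed by the dominant agent and high enough to block the non-dominant agents, and that the budget-capped liquid value $\min\{v_i,b_i\}$—not the raw value—is what the threshold rule actually collects, so the $\Omega(OPT)$ guarantee survives the budget. The second hurdle is porting the large-market revenue concentration of \cite{borgs2005multi} to the liquid-welfare objective under $\bar{P}$, i.e.\ checking that ``no dominant agent'' is the correct large-market condition making $\overline{\epsilon}$ small enough for Theorem \ref{thm:highrev} to give each half a constant share.
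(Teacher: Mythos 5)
Your truthfulness argument matches the paper's (Theorem \ref{thm:liq_truth}, proved in Appendix \ref{appsec:liquid_analysis}): decompose the lottery over $\moov$ and $\moos$, reuse the Theorem \ref{thm:Truthfulness} argument for $\moos$, and treat $A$-agents and $B$-agents of $\moov$ separately. That part is sound.

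The gap is in your two-case approximation analysis. Your claim that ``when a dominant agent exists, $\moov$ alone extracts $\Omega(OPT)$'' is false, and the counterexample is exactly the configuration you flag as ``delicate'' but propose merely to ``verify.'' Take $k$ agents each with $v_i=b_i=W$, so $OPT=W$ and every one of them has liquid value $\bar{v}_i=W=OPT$; all are dominant under any threshold $c<1$. The heads branch of $\moov$ allocates only if $\bar{v}_{i_1}\geq \gamma\bar{v}_{i_2}$ for the top two agents of $A$, which fails (since $\gamma>1$) whenever two or more of the $k$ tied agents land in $A$; the tails branch requires a $B$-agent with $\bar{v}_j\geq \gamma\bar{v}_{i_1}=\gamma W>W$, which no agent satisfies whenever at least one tied agent lands in $A$. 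So $\moov$ succeeds only on the event that at most one of the $k$ tied agents is in $A$, which has probability $(k+1)2^{-k}$, and its expected liquid welfare is exponentially small in $k$. No choice of the dominance constant $c$ or of $\gamma$ repairs this, because the failure is driven by the \emph{multiplicity} of near-tied top liquid values, not by their magnitude. This is precisely why the paper uses a three-way case split: its Case C (credit $\moov$) additionally requires $\bar{v}_1\geq\gamma\bar{v}_2$, so that when agent $1$ lands in $B$ and agent $2$ in $A$ the threshold $\gamma\bar{v}_2$ is surpassed by agent $1$ and by no one else; the complementary regime $\bar{v}_1<\gamma\bar{v}_2$ with a large budget present (Case B) is routed back to $\moos$, where two high-liquid-value agents land in complementary sampling/performance halves with probability $1/4$ and the performance-half agent pays the high price set by the sampling half. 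Your proposal assigns that entire regime to $\moov$ and therefore cannot close.

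A secondary, smaller imprecision: your ``no dominant agent $\Rightarrow$ large market'' step should be phrased in terms of budgets of high-value agents relative to the market-clearing price, as in Lemma \ref{lm:DistributeAgents} ($\max_{i\in V'}b_i\leq \gamma p^*/10$, with concentration via Chebyshev on $B'_Q$), rather than liquid values relative to $OPT$; the two are comparable for agents in $V'$ via Lemma \ref{lm:ClearingPriceRelation}, but that translation needs to be made explicit before invoking any concentration bound of the Theorem \ref{thm:highrev} type.
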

\begin{figure}[H]
	\MyFrame{
		$\moov\lp N\rp$
		\begin{enumerate}
			\item Let $A$ be the set of the first $\frac{n}{2}$ agents (based on reported arrival
			time) and let $B$ be $N \setminus A$.
			Let $t_{0.5} = \hat{a}_{\frac{n}{2}}$ (the reported arrival time of the $\frac{n}{2}$-th agent).
			
			\item At time $t_{0.5}$: Order the agents in set $A$ according to their
			liquid value such that $\bar{v}_{i_1} \geq \bar{v}_{i_2} ... \geq \bar{v}_{i_{\frac{n}{2}}}$. Set $p_2=\gamma \cdot\bar{v}_{i_2}$ and $p_1=\gamma \cdot\bar{v}_{i_1}$. Toss a fair coin.
			\begin{itemize}
				\item If it turns heads, then if $\bar{v}_{i_1} \geq p_2$ and $d_{i_1} \geq t_{0.5}$, allocate the entire item to agent $i_1$ at a price $p_2$.
				\item If the coin turns tail, then wait for the first agent $j\in B$ such that $\bar{v}_{j} \geq p_1$ (if exists)
				and allocate to her the entire item at a price $p_1$.
			\end{itemize}
		\end{enumerate}
	}
	\caption{Online modified VCG mechanism.} \label{fig:mvcg}
\end{figure}

The value of $\gamma > 1$ is determined in the analysis.

\subsection{Analysis}

\begin{theorem} \label{thm:liq_truth}
The mechanism is truthful for any $\gamma > 1$, i.e., each agent maximizes her utility by reporting her true type.
\end{theorem}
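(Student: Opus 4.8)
The plan is to reduce the claim to the truthfulness of the two component mechanisms. Since $\mool$ runs $\moov$ with probability $\mu$ and $\moos$ with probability $1-\mu$, each agent's expected utility under $\mool$ is the corresponding convex combination of her expected utilities under the two mechanisms. Because agents are risk neutral, it therefore suffices to prove that $\moov$ and $\moos$ are \emph{each} truthful: truth-telling then simultaneously maximizes both terms and hence their combination, for every mixing weight $\mu$ and every $\gamma>1$.

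For $\moos=\mtempl(N,1,\bar{P},\moo)$, truthfulness follows from the very same case analysis as in Theorem \ref{thm:Truthfulness}. The only properties of the pricing and allocation functions used there are: (i) an agent in $A_1$ (resp.\ $B_1$) is charged a price computed from the complementary set $A_2$ (resp.\ $A_2-j$), so she cannot influence her own per-item price by misreporting $v_i$ or $b_i$; (ii) given a fixed price $p$ and a fixed position, $\moo$ hands her the best affordable bundle, namely $\min\{b_i/p,k'\}$ items whenever $v_i\ge p$, so truthfully reporting $(v_i,b_i)$ is optimal; and (iii) the monotonicity of $\moo$ (Observation \ref{obs:Monotonicity}) underlies the ``treat a performance-set agent as if she were in the sampling set'' device that neutralizes arrival-time manipulation. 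Since $\bar{P}$ and $\moo$ satisfy all three, the four arrival cases together with the value/budget and departure cases of Theorem \ref{thm:Truthfulness} carry over, so $\moos$ is truthful.

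The substance is the truthfulness of $\moov$, which I would prove by ruling out each deviation. For misreports of $v_i,b_i$ the price is a VCG-style threshold independent of the deviator's own report: a winner in $A$ pays $p_2=\gamma\bar{v}_{i_2}$ (set by the second-highest liquid value) and a winner in $B$ pays $p_1=\gamma\bar{v}_{i_1}$ (set by $A$'s top agent). The true top agent wins on heads exactly when $\bar{v}_{i_1}\ge p_2$, in which case $p_2\le\bar{v}_{i_1}\le b_{i_1}$ and $p_2\le v_{i_1}$, so the allocation is feasible and profitable, and she cannot lower $p_2$. Here $\gamma>1$ is precisely what stops a lower-ranked agent from profitably inflating her report to seize rank one: doing so raises the second-highest liquid value to some $\bar v\ge\bar v_i$ and forces a payment $\gamma\bar v>\bar v_i=\min\{v_i,b_i\}$, which is either infeasible or yields negative utility. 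A symmetric threshold argument with $p_1$ handles agents in $B$. Departure misreports follow Theorem \ref{thm:Truthfulness}: a later $\hat d_i$ can only cause the item to be received outside the true time frame (value $0$ against a positive payment), while an earlier $\hat d_i$ risks failing the $d_{i_1}\ge t_{0.5}$ test and losing the item.

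The main obstacle is the arrival-time analysis for $\moov$, since unlike a value/budget change a change in $\hat a_i$ can move the agent across the $A/B$ boundary and thereby alter the thresholds $p_1,p_2$ themselves. I would split into the analogues of the four cases of Theorem \ref{thm:Truthfulness}. An agent truly in $B$ who reports into $A$ is allocated (if at all) at time $t_{0.5}<a_i$, hence outside her true time frame. An agent truly in $A$ who reports into $B$ relies on the key inequality $p_1\ge p_2$ (immediate from $\bar{v}_{i_1}\ge\bar{v}_{i_2}$): removing her from $A$ and admitting the displaced boundary agent leaves a top liquid value that is still at least $\bar v_{i_2}$, so the threshold $p_1'$ she must now meet in $B$ is at least the price $p_2$ she faced in $A$, while her winning probability remains at most $\tfrac12$; hence her expected utility weakly decreases. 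Within-set arrival changes are immaterial because $A$ is ordered by liquid value rather than by arrival, and $B$-agents are served on arrival so delaying only risks preemption by an earlier qualifying agent. Assembling these cases with the truthfulness of $\moos$ and the convex-combination reduction yields truthfulness of $\mool$ for every $\gamma>1$. The delicate point to get exactly right is the bookkeeping of how $p_1,p_2$ shift when the deviator crosses the boundary and drags the boundary agent with her, which is where $p_1\ge p_2$ does the real work.
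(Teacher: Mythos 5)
Your proposal is correct and follows essentially the same route as the paper: decompose $\mool$ into its two components via risk neutrality, inherit truthfulness of $\moos$ from the case analysis of Theorem \ref{thm:Truthfulness}, and handle $\moov$ by a threshold argument in which $\gamma>1$ blocks value/budget inflation and the $A$-to-$B$ arrival deviation is neutralized because the deviator then faces a threshold at least $p_2=\gamma\bar{v}_{i_2}$ with winning probability at most $1/2$. Your treatment of that last case is in fact marginally more careful than the paper's, which asserts the price is unchanged rather than merely no smaller when the displaced boundary agent may have a high liquid value.
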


The proof of the above theorem is deferred to Appendix \ref{appsec:liquid_analysis}

We now bound the performance guarantee of the mechanism.
Some parts of the analysis resemble the analysis from
~\cite{lu2014liquid}. We bound the performance guarantee for several
cases. For some of them we prefer to bound the expected revenue
instead of bounding the expected liquid welfare. The relation
between revenue and liquid welfare is stated in Lemma $3.4$ from
~\cite{lu2014liquid}.

\begin{lemma} \label{lm:RevenueLiquidRelation} ~\cite{lu2014liquid}
	The liquid welfare produced by any truthful and budget feasible
	mechanism is at least the revenue of the auctioneer.
\end{lemma}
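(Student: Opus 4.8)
The plan is to prove the inequality separately for each agent and then sum. Concretely, I would argue that for every agent $i$ the payment $p_i$ extracted by the mechanism never exceeds her liquid value $\tilde{v}_i(x_i) = \min(v_i x_i, b_i)$, so that summing over all agents bounds the total payment (the revenue) by $\sum_i \tilde{v}_i(x_i) = \tilde{W}(x)$, the liquid welfare.

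The per-agent bound rests on two elementary observations, each controlling $p_i$ by one of the two terms inside the minimum. First, budget feasibility forces $p_i \le b_i$: by the utility function in~(\ref{ValuationFunction}), any payment exceeding the budget yields utility $-\infty$, which is incompatible with individual rationality, so $p_i \le b_i$ must hold (almost surely). Second, individual rationality forces $p_i \le v_i x_i$: since truthfulness guarantees that agent $i$ reports her true type, her realized utility is $v_i x_i - p_i$, and nonnegativity of this utility gives exactly $p_i \le v_i x_i$. Taking the two bounds together yields $p_i \le \min(v_i x_i, b_i) = \tilde{v}_i(x_i)$ for every $i$.

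With the per-agent inequality in hand, the conclusion is immediate: $\text{revenue} = \sum_i p_i \le \sum_i \tilde{v}_i(x_i) = \tilde{W}(x)$. The only point I would treat with care is randomization. Budget feasibility is naturally an ex-post constraint ($p_i \le b_i$ holds in every realization), whereas individual rationality is stated in expectation, giving $\E[p_i] \le v_i \E[x_i]$ rather than a pointwise bound. I would therefore take expectations throughout and measure both revenue and liquid welfare with respect to the expected allocation $\E[x_i]$, so that the per-agent bound becomes $\E[p_i] \le \min(v_i \E[x_i], b_i) = \tilde{v}_i(\E[x_i])$ and survives the summation; for mechanisms that are individually rational ex-post the same argument goes through realization-by-realization without any such adjustment. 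The main (mild) obstacle is thus not the summation but pinning down these two payment bounds cleanly from the stated feasibility and incentive properties, and being consistent about whether the minimum is taken inside or outside the expectation.
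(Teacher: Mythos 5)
Your proof is correct and is essentially the standard argument from Lu and Xiao's Lemma 3.4, which the paper cites without reproducing: budget feasibility gives $p_i\leq b_i$, individual rationality under truthful reporting gives $p_i\leq v_i x_i$, hence $p_i\leq\min(v_ix_i,b_i)=\tilde{v}_i(x_i)$, and summing over agents bounds revenue by liquid welfare. Your handling of randomization (ex-post budget feasibility versus expected individual rationality, measuring liquid welfare at the expected allocation) is a careful and correct treatment of the same approach, not a different one.
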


We introduce the following definitions. Recall that $j$ is the last agent in set $A$ (according to her reported arrival time), where $j\sim
B(n-1,1/2)+1$. Let $N' = N - j$. Let $p^* = \bar{P}(N')$ be the
market clearing price of the total set of agents without agent $j$.
Let $OPT$ and $OPT'$ be the optimal liquid welfare for sets $N$ and $N'$
respectively. Let $V = \{i \in N|v_i \geq p^*\}$ and $V' = \{i \in
N'|v_i \geq p^*\}$. For every set of agents $T \subset N'$ we define
$B_T = \sum_{i \in T} b_{i}$, $B'_T = \sum_{i \in T \cap V'} b_{i}$.
By definition, $B_V' \geq p^*$ (Recall that the market clearing
price is the maximal price such that there is no deficiency or
surplus of any good. Hence, it is clear that the budget of the agent
from set $V'$ is at least $p^*$).

The following lemma proves some relations regarding the market clearing price.
In particular, it proves that the market clearing price is a good
approximation for the optimal liquid welfare. This Lemma was stated and proved by Lu and Xiao in an earlier version of \cite{lu2014liquid}, and removed in the latest version. We state and prove it here for the sake of completeness.
\begin{lemma} [From a previous version of ~\cite{lu2014liquid}\footnote{The previous version can be found in \url{http://arxiv.org/pdf/1407.8325v2.pdf}.}]\label{lm:ClearingPriceRelation}
	Let $S$ and $T\subseteq S$ be two sets of agents and let
	$W$ be the optimal liquid welfare from the set $S$. Then:
	\begin{enumerate}
		\item $\bar{P}(S) \geq \frac{1}{2}W$.
		\item $\bar{P}(T) \leq \bar{P}(S)$.
	\end{enumerate}
\end{lemma}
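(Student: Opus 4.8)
The plan is to read $\bar{P}(S)$ as a genuine market-clearing price and then treat the two claims separately. For a price $p>0$ define the aggregate demand-budget $f_S(p)=\sum_{i\in S:\, v_i\ge p} b_i$; at price $p$ the willing buyers can collectively purchase $f_S(p)/p$ units, so the market for the single unit clears exactly when $f_S(p)=p$. My first step is to verify, by checking the two cases hidden in the definition $\bar{P}(S)=\max\{\sum_{j=1}^{k}b_{i_j},\,v_{i_{k+1}}\}$, that
\[
\bar{P}(S)=\max\{p>0:\ f_S(p)\ge p\},
\]
with the maximum attained. Since $f_S(p)-p$ is strictly decreasing in $p$, there is a unique crossing, so this characterization is unambiguous. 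Along the way I would record the key prefix inequality: any agent whose value strictly exceeds $\bar{P}(S)$ must lie among $i_1,\dots,i_{k}$ (because $v_{i_{k+1}}\le\bar{P}(S)$), whence
\[
\sum_{i\in S:\, v_i>\bar{P}(S)} b_i \;\le\; \sum_{j=1}^{k} b_{i_j}\;\le\;\bar{P}(S),
\]
using $\bar{P}(S)\ge\sum_{j=1}^{k}b_{i_j}$ from the defining maximum; I will refer to this as $(\star)$.

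For Part~1 I would upper bound the optimal liquid welfare $W$. Fix an optimal allocation $x^{*}$ with $\sum_i x^{*}_i\le 1$, and write $p=\bar{P}(S)$. Partition the agents into $L=\{i:v_i\le p\}$ and $H=\{i:v_i>p\}$. On $L$ I bound $\min(v_ix^{*}_i,b_i)\le v_ix^{*}_i\le p\,x^{*}_i$, so the total contribution of $L$ is at most $p\sum_i x^{*}_i\le p$. On $H$ I bound $\min(v_ix^{*}_i,b_i)\le b_i$ and invoke $(\star)$ to get a contribution of at most $p$. Summing the two bounds gives $W\le 2\bar{P}(S)$, which is exactly claim~1. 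The point worth emphasizing is that these are genuinely complementary regimes: the $\min$ with $v_ix_i$ caps the high-value agents' contribution via the allocation constraint only for $v_i\le p$, while $(\star)$ is precisely what controls the high-value agents whose budgets could otherwise be arbitrarily large.

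For Part~2 the demand-function characterization does all the work. Since $T\subseteq S$, every agent of $T$ also belongs to $S$, so $f_T(p)\le f_S(p)$ for every price $p$. Hence $\{p:f_T(p)\ge p\}\subseteq\{p:f_S(p)\ge p\}$, and taking maxima yields $\bar{P}(T)\le\bar{P}(S)$.

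The main obstacle is the bookkeeping in the very first step, namely showing that the combinatorial definition of $\bar{P}(S)$ genuinely coincides with the clearing-price characterization, together with a careful treatment of the edge cases: ties in the values (several agents with $v_i=\bar{P}(S)$), $k=0$ (a single over-budget agent forcing $\bar{P}(S)=v_{i_1}$), and $k=n'$ (all agents admitted, with $v_{i_{k+1}}$ read as $0$). Once that identity and the prefix inequality $(\star)$ are established, both claims reduce to a few lines, as sketched above.
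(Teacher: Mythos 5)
Your proof is correct, and it diverges from the paper's in a way worth noting. For Part~1 the two arguments are essentially the same two-regime decomposition — bound one group of agents by their budgets and the complementary group by the single-unit allocation constraint times a price — except that the paper splits by index (the prefix $i_1,\dots,i_k$ versus the rest, bounding the tail by $v_{i_{k+1}}\sum_{i>k}x^*_i$) while you split by the value threshold $v_i>\bar{P}(S)$ versus $v_i\le\bar{P}(S)$; both splits work, and your $(\star)$ is exactly the prefix inequality the paper uses implicitly via $\sum_{j=1}^k b_{i_j}\le\bar{P}(S)$. The real difference is Part~2: the paper proves monotonicity of $\bar{P}$ by a case analysis on indices (comparing $k_T$ against the largest element $k'$ of $T\cap\{1,\dots,k\}$ and splitting on whether $i_{k_T}=k'$ or $i_{k_T}>k$), whereas you first establish the clean characterization $\bar{P}(S)=\max\{p>0: f_S(p)\ge p\}$ with $f_S(p)=\sum_{i\in S:\,v_i\ge p}b_i$, after which monotonicity is immediate from $f_T\le f_S$ pointwise. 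That characterization does hold (in both branches of the $\max$ one checks $f_S(\bar{P}(S))\ge\bar{P}(S)$, and for $p>\bar{P}(S)$ only agents in $\{i_1,\dots,i_k\}$ contribute so $f_S(p)\le\sum_{j\le k}b_{i_j}\le\bar{P}(S)<p$; attainment follows since $f_S$ is left-continuous), so your route is sound. What it buys you is a shorter, less error-prone Part~2 and a reusable lemma; the cost is the up-front bookkeeping of verifying the characterization and its edge cases ($k=0$, $k=n'$, ties), which you correctly flag as the main burden.
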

\begin{proof}
	Let $S=\{1,\ldots, n\}$, $v_1\geq v_2\geq \ldots v_n$ and let $k$ be the maximum integer such that $\sum_{i=1}^k b_i\leq v_k$. By the definition of the market clearing price we have that $\bar{P}(S)=\max\{\sum_{i=1}^k b_i,v_{k+1}\}$.
	Let $x^*= \{x^*_i\}_{i\in S}$ be an allocation that maximizes the social welfare of the agents in $S$. First we prove the first part of the lemma: $$W=\sum_{i\in S}\min\{v_ix^*_i, b_i\}\leq \sum _{i=1}^k b_i+\sum_{i=k+1}^n v_ix^*_i\leq \sum _{i=1}^k b_i+v_{k+1}\sum_{i=k+1}^n x^*_i\leq \sum _{i=1}^k b_i+v_{k+1}\leq 2\bar{P}(S),$$ where the first inequality is derived from the definition of the $\min$ function, the third inequality is due to the fact that only one items is allocated and the fourth inequality results from the definition of clearing market price.
	
	Now we prove the second part of the lemma. Let $T=\{i_1,i_2,\ldots,i_{n'}\}$ such that $v_{i_1}\geq v_{i_2}\geq \ldots v_{i_{n'}}$. Let $k'$ be the maximum $i\in\{1,\ldots, k\}$ such that $k'\in T$.
	Let $k_T$ the maximum integer such that $\sum_{j=1}^{k_T} b_{i_j}\leq v_{i_{k_T}}$ (i.e., $\bar{P}(T)=\max\{\sum_{i=1}^{k_{T}} b_i,v_{k_T+1}\}$). 
	Since $k' \leq k$, according to the definition of $k$, $\sum_{j=1}^{k'} b_{j}\leq v_{k'}$. 
	Combining with the fact that T is a subset of S we have that either $i_{k_T}=k'$ or $i_{k_T}> k$. 
	For the first case we have that $\bar{P}(T)= \max\{\sum_{j=1}^{k_T} b_{i_j},v_{i_{k_T+1}}\}\leq \max\{\sum_{j=1}^{k} b_{j},v_{k+1}\}= \bar{P}(S)$ since $i_{k_T} = k' \leq k$ and $i_{k_T+1}>k$ (according to the definition of $k'$). 
	On the other hand, if $i_{k_T}> k$ then $\bar{P}(T)=\max\{\sum_{j=1}^{k_T} b_{i_j},v_{i_{k_T+1}}\}\leq  v_{i_{k_T}}\leq v_{k+1}\leq \max\{\sum_{j=1}^{k} b_{j},v_{k+1}\}= \bar{P}(S)$.
	When the first inequality results from the definition of the clearing market price.
	 
	This complete the proof of the lemma.
\end{proof}
As a corollary of the Lemma \ref{lm:ClearingPriceRelation}, we get that
$p^* \geq \frac{1}{2}OPT'$, $p_{A_1} \leq p^*$ and $p_{A_2 - j}
\leq p^*$. The following lemma shows that there is not a significant
difference between $OPT'$ and $OPT$. 
\begin{lemma} \label{lm:OPTtToOPTRelation}
	$\Pr\left[OPT'\geq \big(1-\frac{1}{k+1}\big)OPT \right]\geq 1-\frac{k}{n}$.
\end{lemma}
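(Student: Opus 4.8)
The plan is to mirror the argument behind Lemma \ref{lm:OPTtToOPTRelationRev}, but now counting how many agents can individually carry a large share of the optimal liquid welfare. First I would fix an allocation $x^*$ attaining $OPT$ on the full set $N$, so that $OPT = \sum_{i \in N} \tilde{v}_i(x^*_i)$. The crucial observation is a restriction (monotonicity) property: removing a single agent cannot destroy much value, because the restriction of $x^*$ to $N' = N - j$ is a feasible allocation for $N'$ (it allocates no more of the divisible item than $x^*$ did), and it achieves liquid welfare exactly $OPT - \tilde{v}_j(x^*_j)$. Hence $OPT' \geq OPT - \tilde{v}_j(x^*_j)$, and in particular $OPT' \geq \big(1 - \tfrac{1}{k+1}\big)OPT$ whenever agent $j$'s contribution satisfies $\tilde{v}_j(x^*_j) \leq \tfrac{1}{k+1}OPT$.

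Next I would bound the number of agents that can violate this last inequality. Since the contributions $\tilde{v}_i(x^*_i)$ are nonnegative and sum to $OPT$, at most $k$ of them can strictly exceed $\tfrac{1}{k+1}OPT$: if $k+1$ distinct agents each contributed more than $\tfrac{1}{k+1}OPT$, their combined contribution would already exceed $OPT$, a contradiction. Call these (at most $k$) agents the \emph{heavy} agents; they are determined entirely by the fixed multiset of types and the fixed choice of $x^*$.

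Finally I would invoke the randomness of $j$. As in the revenue analysis, because the coin tosses defining $j \sim B(n-1,1/2)+1$ are independent of the random permutation assigning the (value, budget) pairs, the agent occupying position $j$ is distributed uniformly over $N$; equivalently, the pair removed when passing from $N$ to $N'$ is a uniformly random one. Therefore the probability that $j$ is one of the at most $k$ heavy agents is at most $k/n$. On the complementary event, of probability at least $1 - k/n$, agent $j$ is not heavy, so $\tilde{v}_j(x^*_j) \leq \tfrac{1}{k+1}OPT$ and hence $OPT' \geq \big(1 - \tfrac{1}{k+1}\big)OPT$, which is exactly the claim.

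The argument is essentially a pigeonhole count combined with the uniformity of $j$, so there is no single hard computational step. I expect the only delicate points to be two bookkeeping matters: justifying that the agent at position $j$ is uniform over $N$ (so that removing it removes a uniformly random type), and noting that $OPT$ together with the contributions $\tilde{v}_i(x^*_i)$ are deterministic functions of the fixed multiset of types, with only the identity of the removed agent being random.
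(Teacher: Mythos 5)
Your proposal is correct and follows essentially the same route as the paper's proof: fix an optimal allocation $x^*$, observe by pigeonhole that at most $k$ agents can each contribute more than $OPT/(k+1)$ (the paper phrases this via the set $I_k$ of the top-$k$ contributors), use the uniformity of the removed agent $j$ to bound the bad event by $k/n$, and conclude via the feasibility of the restricted allocation that $OPT' \geq OPT - \tilde{v}_j(x^*_j)$. The only difference is cosmetic bookkeeping, so nothing further is needed.
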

\begin{proof}
	Let $x^*$ be an optimal allocation. Let $I_k$ be the set of the $k$ agents with the highest liquid welfare under $x^*$. % \big(i.e., for every $i\in I_k$ and for every $j\in (N\setminus I_k)$,  $\tilde{v}_i(x^*(i))\geq \tilde{v}_j(x^*(j))$\big). 
	Note that for every agent $\ell\in (N\setminus I_k)$, we have that $\tilde{v}_{\ell}(x^*_{\ell})\leq \frac{OPT}{k+1}$; otherwise, 
	$\sum_{i\in (I_k+\ell)}\tilde{v}_i(x^*_i)> (k+1)\frac{OPT}{k+1} = OPT$ 
	for some $\ell\in (N\setminus I_k)$. 
	Since agent $j$ is chosen uniformly at random from set $N$ (recall that the agents arrive in a random order), 
	the probability that $j\in I_k$ is $\frac{k}{n}$. In case that $j\in (N\setminus I_k)$, 
	we get that $$OPT'\geq \sum_{i\in N-j}\tilde{v}_i(x^*_j)\geq OPT-\frac{OPT}{k+1} = \left(1-\frac{1}{k+1}\right)OPT.$$
\end{proof}
Using Observation \ref{obs:Partition}, we get that every agent of $V'$ is in
each of the sets $A_1$, $A_2-j$, $B_1$ and $B_2$ with an equal and
independent probability. Let $Q = {A_1 \cup B_1 \cup B_2}$. Hence, each
agent of $V'$ is placed in set $Q$ with a probability of $\frac{3}{4}$ and in set $A_2-j$ with a probability of $\frac{1}{4}$.

For simplicity, we normalize $p^* = 1$. The
following lemma shows that if all the agents in set $V'$ have a
relatively small budget, both $Q$ and $A_2-j$ get a significant
fraction of the budget of the agents in set $V'$.

\begin{lemma} \label{lm:DistributeAgents}
	If $\max_{i \in V'} b_i \leq \frac{\gamma}{10}$ then $\Pr\left[\min\{B'_{Q}, B'_{A_2- j}\} \geq \frac{1}{15}\right] \geq 1 - \frac{135}{242}\gamma$.
\end{lemma}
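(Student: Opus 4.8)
The plan is to read $B'_{A_2-j}$ and $B'_{Q}$ as sums of independent, bounded contributions and to control the binding side with Chebyshev's inequality. By Observation~\ref{obs:Partition} each agent $i\in V'$ lands in $A_2-j$ independently with probability $\tfrac14$ and in $Q$ with probability $\tfrac34$; writing $X_i$ for the indicator that $i\in A_2-j$, we have $B'_{A_2-j}=\sum_{i\in V'}b_iX_i$ and $B'_Q=\sum_{i\in V'}b_i(1-X_i)$. The two facts I would lean on are $\sum_{i\in V'}b_i=B'_{V'}\ge p^*=1$ and, from the hypothesis, $b_i\le\gamma/10$ for every term; the latter is exactly the small-market control that keeps the variance small, and it is what the large-budget-free regime buys us.

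First I would reduce to a bounded total budget. Since each $b_i\le\gamma/10<1$ and $\sum_{i\in V'}b_i\ge1$, adding budgets greedily until they first reach $1$ yields a subset $V''\subseteq V'$ with $s:=\sum_{i\in V''}b_i\in[1,\,1+\gamma/10]$. Dropping the remaining non-negative terms only decreases each sum, so it suffices to lower-bound $W:=\sum_{i\in V''}b_iX_i\le B'_{A_2-j}$ and $W':=\sum_{i\in V''}b_i(1-X_i)\le B'_Q$. Here $\E[W]=s/4\ge\tfrac14$, $\E[W']=3s/4\ge\tfrac34$, and both share the variance $\tfrac{3}{16}\sum_{i\in V''}b_i^2\le\tfrac{3}{16}\cdot\tfrac{\gamma}{10}\cdot s=\tfrac{3\gamma s}{160}$, using $\sum b_i^2\le(\max_i b_i)\sum b_i$.

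The main computation is Chebyshev applied to $W$, the smaller-mean and hence binding side. Since the deviation from the mean is at least $s/4-\tfrac1{15}$, we obtain $\Pr[W<\tfrac1{15}]\le\tfrac{3\gamma s/160}{(s/4-1/15)^2}$. The map $s\mapsto s/(s/4-1/15)^2$ is decreasing on $[1,1+\gamma/10]$, so the worst case is $s=1$, where the bound equals $\tfrac{3\gamma/160}{(11/60)^2}=\tfrac{135}{242}\gamma$. This is precisely the target, and I expect it to be the crux of the lemma. The analogous estimate for $W'$ uses the larger deviation $3s/4-\tfrac1{15}\ge\tfrac{41}{60}$ and is smaller by more than an order of magnitude.

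Finally I would combine the two sides. Because $W+W'=s\ge1$, the events $\{W<\tfrac1{15}\}$ and $\{W'<\tfrac1{15}\}$ are disjoint, and $\{\min\{B'_Q,B'_{A_2-j}\}<\tfrac1{15}\}\subseteq\{W<\tfrac1{15}\}\cup\{W'<\tfrac1{15}\}$, so the failure probability is governed by the $A_2-j$ lower tail while the $Q$ side contributes only a lower-order term. The main obstacle is exactly this last accounting: the clean constant $\tfrac{135}{242}$ arises entirely from the binding tail at $s=1$, so the argument must show that the $Q$-tail, with its much larger deviation from the mean $\tfrac34$, is absorbed rather than inflating the bound -- equivalently, that the $s=1$ worst case for $B'_{A_2-j}$ already dominates the whole failure event.
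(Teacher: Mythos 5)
Your setup is the paper's: write each budget sum as a weighted sum of independent indicators (via Observation~\ref{obs:Partition}), use $b_i\le\gamma/10$ together with $\sum_{i}b_i^2\le(\max_i b_i)\sum_i b_i$ to bound the variance by $\tfrac{3\gamma s}{160}$, and apply Chebyshev. Your variance computation and the evaluation $\tfrac{3\gamma/160}{(11/60)^2}=\tfrac{135}{242}\gamma$ at $s=1$ are exactly right, and your reduction to a subset $V''$ with $s\in[1,1+\gamma/10]$ is a legitimate (slightly more careful) version of the paper's ``decrease budgets until $B_{V'}=1$'' step.

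However, the final accounting as you wrote it does not reach the stated bound, and you flag this yourself as an unresolved ``main obstacle.'' Adding the two tail bounds gives $\tfrac{135}{242}\gamma+\tfrac{3\gamma s/160}{(3s/4-1/15)^2}=\tfrac{135}{242}\gamma+\tfrac{135}{3362}\gamma$ at $s=1$, which strictly exceeds $\tfrac{135}{242}\gamma$; disjointness of the two failure events does not help, since for disjoint events the union bound is an equality. The missing observation is that, because $W+W'=s$, the event $\{W'<\tfrac1{15}\}$ is the \emph{upper}-tail event $\{W>s-\tfrac1{15}\}$ for the same variable $W$, and $s-\tfrac1{15}-\tfrac{s}{4}=\tfrac{3s}{4}-\tfrac1{15}>\tfrac{s}{4}-\tfrac1{15}$, so both failure events are contained in the single two-sided event $\{|W-\tfrac{s}{4}|\ge \tfrac{s}{4}-\tfrac1{15}\}$. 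One application of Chebyshev to that event yields exactly $\tfrac{135}{242}\gamma$ with no second term. This is precisely what the paper does, with the roles reversed: it centres on $B'_Q$ with mean $\tfrac34$, bounds $\Pr[|B'_Q-\tfrac34|\ge\tfrac{11}{60}]\le\tfrac{135}{242}\gamma$, and on the complement reads off $B'_Q\ge\tfrac{17}{30}$ and $B'_{A_2-j}=1-B'_Q\ge\tfrac1{15}$ simultaneously. With that one-line fix your argument closes; without it, the constant in the lemma is not attained.
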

\begin{proof}
	Recall that $B_{V'} \geq p^* = 1$. Assume that $B_{V'} = 1$ (if $B_{V'} > 1$, we can decrease some of the budgets, and the claim
	only gets stronger). Let $\mathbb{I}_i$ be a random indicator such that $\mathbb{I}_i = 1$
	if agent $i$ belongs to set $Q$ and $\mathbb{I}_i = 0$ otherwise. Therefore,
	$B'_{Q} = \sum_{i \in V'} b_{i}\mathbb{I}_i$, $\E\left[B'_Q\right] = \frac{3}{4}$ and
	\begin{eqnarray*}
		Var\left[B'_Q\right] & = & \sum_{i \in V'} Var\left[b_{i}\mathbb{I}_{i}\right]   =  \sum_{i \in V'}\left(\E\left[(b_{i}\mathbb{I}_{i})^2\right] - \E\left[b_{i}\mathbb{I}_{i}\right]^2\right)
		\\ & = & \sum_{i \in V'}\frac{3}{16}b_{i}^2 \leq \frac{1}{(\gamma /10)}
		\frac{3}{16} \cdot (\gamma /10)^2
		= \frac{3}{160} \gamma,
	\end{eqnarray*}
	where the inequality holds since setting $b_i=\gamma/10$ for every $i\in V'$ maximizes $\sum_{i \in V'}b_{i}^2$ under the constraints $\max_{i \in V'} b_i\leq \gamma/10$ and $\sum_{i\in V'}b_i=1$.
	Using Chebyshev's inequality we get:
	$$\Pr\left[\left| B'_{Q} - \frac{3}{4}\right| \geq 11/60\right] \leq \frac{Var\left[B'_{Q}\right]}{\left(\frac{11}{60}\right)^2} \leq \frac{135}{242}\gamma.$$
	When $|B'_{Q} - \frac{3}{4}| \leq \frac{11}{60}$ then
	$\frac{17}{30} \leq B'_{Q} \leq \frac{14}{15}$. Since $B'_{Q} + B'_{A_2 - j} = B_V' = 1$, we get that
	$B'_{A_2 - j} \geq \frac{1}{15}$, which completes the proof.
\end{proof}

In the following theorem we uses the relation between mechanisms
$\mooa$ and $\moob$ (see Figures \ref{fig:M1} and \ref{fig:M2}) presented in Section \ref{sec:rev}. Recall
that $\mooa(S,p,k)$ is a mechanism that allocates the item
according to $\moo(S,p,k)$ and charges a price of $p$ per the entire item.
$\moob(S,p,k,z)$ is a mechanism that sells according to the
following modification. It divides a set $S$ into sets $S_1$ and
$S_2$ such that each agent has a probability of $z$ to be placed in
set $S_1$ and a probability of $1-z$ to be placed in set $S_2$.
It then sells the items to agents in set $S_1$ in a random order at a
price $p$ per the entire item. The allocation for each agent is calculated as if
the agent was in a random permutation in set $S_2$. 

Note that for $n<100$, devising an $100$-approximation truthful mechanism is trivial --- allocate the entire item to a randomly chosen agent. Therefore, from now on we assume that $n\geq 100$. Now we are ready to prove our main result of this section:
\begin{theorem}
	Setting $\mu = 1/10$ and $\gamma=1.0001$, mechanism $\mool$ gives a $2443$-approximation to the optimal liquid welfare.
\end{theorem}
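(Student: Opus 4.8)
The plan is to split the analysis by whether there exists a "dominant" agent — one whose liquid value is a constant fraction of $OPT$ — and to show that in each case at least one of the two component mechanisms, $\moov$ run with probability $\mu$ or $\moos$ run with probability $1-\mu$, captures a constant fraction of $OPT$. The overall approximation factor $2443$ will come from multiplying the probability $\mu$ (resp. $1-\mu$) of selecting the right mechanism, the internal approximation loss of that mechanism, and the additional $(1-\frac{1}{k+1})$ loss from passing between $OPT$ and $OPT'$ via Lemma~\ref{lm:OPTtToOPTRelation}. I would fix the dominance threshold in terms of $\gamma$ and work throughout with the normalization $p^*=1$, using the corollary $p^*\geq\frac{1}{2}OPT'$ from Lemma~\ref{lm:ClearingPriceRelation}.

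\textbf{Case 1: a single dominant agent.} Here I would argue that $\moov$ does the work. Let $i^*$ be the agent of largest liquid value, and suppose $\bar v_{i^*}$ is a large fraction of $OPT'$. With probability $1/2$ the coin in $\moov$ comes up heads and, provided $i^*$ lands in $A$ (and did not depart), the item is sold to $i^*$ at price $p_2=\gamma\bar v_{i_2}$, extracting revenue close to $\bar v_{i^*}$ up to the factor $\gamma$ and the gap between $\bar v_{i_1}$ and $\bar v_{i_2}$; on tails, the first agent in $B$ surpassing the threshold $p_1$ yields comparable revenue. The key probabilistic fact is that $i^*$ is in $A$ with probability $1/2$ and the second-highest agent is also distributed appropriately by the random arrival order, so in expectation $\moov$ collects $\Omega(OPT')$ in revenue, which by Lemma~\ref{lm:RevenueLiquidRelation} lower-bounds its liquid welfare. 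Multiplying by the selection probability $\mu=1/10$ and by $1/2$ (the coin) and by the $\gamma=1.0001$ slack gives the contribution in this case.

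\textbf{Case 2: no single dominant agent.} Here I would argue that $\moos$ succeeds. Since no agent of $V'$ is dominant, the hypothesis $\max_{i\in V'}b_i\leq\gamma/10$ of Lemma~\ref{lm:DistributeAgents} holds, so with probability at least $1-\frac{135}{242}\gamma$ both $Q$ and $A_2-j$ receive budget-mass at least $\frac{1}{15}$ from $V'$. This means the price $\bar P(\cdot)$ computed on $A_2-j$ is well-behaved and the greedy allocation $\moo$ run at that price over $B_1$ (treated via $\moob$ as if inside $A_1$) sells a constant fraction of the item at price $\Theta(p^*)=\Theta(1)$. I would then invoke Lemma~\ref{lm:Restriction} with $z=\frac12$ to account for the performance-set restriction (costing a factor $2$), combine it with the normalization $p^*\geq\frac12 OPT'$, and pick up the factor $1-\mu=9/10$ for selecting $\moos$. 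Summing revenue across $B_1$ and $B_2$ and again applying Lemma~\ref{lm:RevenueLiquidRelation} yields $\Omega(OPT')$ liquid welfare.

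\textbf{Final assembly and the main obstacle.} Finally I would convert $OPT'$ back to $OPT$ using Lemma~\ref{lm:OPTtToOPTRelation} with a suitable constant $k$, noting that $n\geq100$ keeps the loss $\frac{k}{n}$ and $1-\frac{1}{k+1}$ under control, and take the worse of the two cases to obtain a single constant. \textbf{The hard part will be} the bookkeeping of constants: each mechanism is only correct in \emph{one} of the two regimes, yet the same fixed $\mu$ and $\gamma$ must simultaneously make Case~1 good (which wants $\gamma$ close to $1$ and $\mu$ not too small) and Case~2 good (where Lemma~\ref{lm:DistributeAgents}'s success probability $1-\frac{135}{242}\gamma$ forces $\gamma$ very close to $1$, pinning $\gamma=1.0001$, and where the budget-spread bound $\frac{1}{15}$ and the restriction factor interact to fix $\mu=1/10$). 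The delicate point is choosing the dominance threshold so that the "single dominant agent'' and "no dominant agent'' cases are genuinely exhaustive, and then verifying that the product of all the multiplicative losses in the weaker of the two cases still exceeds $1/2443$.
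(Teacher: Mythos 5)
Your proposal has a genuine gap: the two-case decomposition is not exhaustive in the way your arguments require. The regime that falls through the cracks is ``two or more dominant agents,'' i.e.\ $\max_{i\in V}b_i>\gamma/10$ but $\bar{v}_1<\gamma\bar{v}_2$. In your Case~1 you let $\moov$ handle ``a single dominant agent,'' but $\moov$ genuinely fails when the top two liquid values are within a factor $\gamma$ of each other: on heads the winner $i_1$ in $A$ must meet the threshold $p_2=\gamma\bar{v}_{i_2}$, and on tails an agent in $B$ must surpass $p_1=\gamma\bar{v}_{i_1}$; if $\bar{v}_1<\gamma\bar{v}_2$ it can happen that nobody clears either threshold and nothing is sold. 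So $\moov$ cannot absorb the multi-dominant regime. In your Case~2 you then assert that ``no single dominant agent'' implies $\max_{i\in V'}b_i\leq\gamma/10$, which is false precisely in the multi-dominant regime (e.g.\ two agents in $V'$ each with budget $1/2$): Lemma~\ref{lm:DistributeAgents} does not apply there, and its Chebyshev argument really does need every individual budget to be a small fraction of $B_{V'}$. You flag ``choosing the dominance threshold so the cases are exhaustive'' as the hard part, but no choice of threshold fixes this; a genuinely separate argument is needed.

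The paper resolves this with a third case: when $\max_{i\in V}b_i>\gamma/10$ and $\bar{v}_1<\gamma\bar{v}_2$, it shows $\moos$ still extracts constant revenue, not via Lemma~\ref{lm:DistributeAgents} but via a different constant-probability event --- the second-highest liquid-value agent lands in the sampling half used for pricing while the high-budget agent $i'$ lands in the complementary half, so the posted price is at least $\bar{v}_2/\gamma\cdot\Theta(1)$ and there is budget at least $\gamma/10$ on the buying side; Lemma~\ref{lm:Restriction} then transfers this to the online mechanism. Two smaller inaccuracies in your sketch: in the no-dominant case the restriction lemma is applied with $z=1/3$ (an agent of $Q=A_1\cup B_1\cup B_2$ is in $B_1$ with probability $1/3$), not $z=1/2$; and the single-dominant analysis in the paper actually collects its revenue from the tails branch (agent $1$ in $B$, agent $2$ in $A$, probability $1/4\cdot 1/2$), not from the heads branch you describe. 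Neither of these is fatal, but the missing third case is.
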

\begin{proof}
We bound the expected welfare for the event that $OPT'\geq \frac{9}{10}OPT$ occurs. 
In this case $OPT \leq \frac{10}{9}OPT'\leq \frac{20}{9}p^*=\frac{20}{9}$, where he second inequality stems from Lemma \ref{lm:ClearingPriceRelation}.

We bound the performance of the mechanism for three different cases. Assume by renaming that $1$ and $2$ are the agents with the highest liquid value and the second highest liquid value in set $N$ respectively.\\\\
\textbf{Case A} (no ``dominant" agent): $\max_{i\in V}b_i \leq \gamma/10$.\\
We bound the expected revenue from mechanism $\moos(Q,p_{A_2-j},\frac{1}{4})$; by Lemma \ref{lm:RevenueLiquidRelation} it is sufficient.
It is clear that $\max_{i\in V'}b_i \leq \gamma/10$. We calculate the expected revenue for the event that $\min\{B'_Q,B'_{A_2-j}\}\geq 1/15$ occurs. In this case, by Lemma \ref{lm:ClearingPriceRelation}, $p_{A_2-j}=\bar{P}(A_2-j)\geq B'_{A_2-j}/2\geq \frac{1}{30}$. First we calculate the expected revenue from $\mooa(Q,p_{A_2-j},\frac{1}{4})$. There are two possible outcomes. If the entire $\frac{1}{4}$ fraction was sold, then the revenue is at least $\frac{1}{4\cdot 30} = \frac{1}{120}$. 
Otherwise, the entire budget of the agents with value at least $p_{A_2 - j}$ was
exhausted. According to Lemma \ref{lm:ClearingPriceRelation},
$p_{A_2 - j} \leq p^*$. Therefore, the budget of agents in set $Q
\cap V'$ was exhausted and the revenue is at least $B'_Q\geq \frac{1}{15}$. 
Combining the two cases we get that the expected revenue from mechanism $\moos(Q,p_{A_2-j},\frac{1}{4})$ is at least 
min$\{\frac{1}{120},\frac{1}{15}\} = \frac{1}{120}$.
Applying Lemma \ref{lm:Restriction} with $S=Q$ and $z=1/3$ we get that the expected revenue from mechanism $\moob(Q,p_{A_2-j},\frac{1}{4},\frac{1}{3})$ is at least $\frac{1}{3\cdot 120} = \frac{1}{360}$.

Observe that
the expected revenue obtained from agents in set $B_1$ by running
mechanism $\moos(N)$ is at least the expected revenue from $\moob(Q,
p_{A_2 - j},\frac{1}{4}, \frac{1}{3})$, since when running mechanism
$\moos(N)$ the agents in set $B_1$ are competing only against agents
from set $A_1$ and when running $\moob(Q, p_{A_2 - j},\frac{1}{4},
\frac{1}{3})$ they have to compete against agents from set $A_1 \cup
B_2$.
We get that the expected revenue is at least $(1-\mu)(1-\frac{145}{242}\gamma)/360\geq 1/1000$.\\\\ 
\textbf{Case B} (More than one ``dominant" agent): $\max_{i\in V}b_i > \gamma/10$ and $\bar{v}_1<\gamma\bar{v}_2$.\\
We bound the expected revenue from mechanism $\moos$. Since $\max_{i \in V} b_i \geq \gamma/10$, $\bar{v}_{1} \geq \min\{1,
\gamma /10\} = \gamma /10$ (recall that $p^*$ is normalized to $1$). Moreover, $\bar{v}_2>\bar{v}_1/\gamma\geq 1/10$.
Let $i'$ be the agent with the highest liquid value in set $V$. Clearly, $\bar{v}_{i'}\geq \gamma/10$. 

Next, we bound the probability that agent $j$ is not one of the agents $i',1,2$. Recall that the number of agents is at least $100$.
We assume that $OPT'\geq \frac{9}{10}OPT$, and therefore, $j$ is not one of the 9 agents with the highest welfare in the optimal allocation.
It is clear that the worst case is when agents $i',1,2$ are not among the 9 agents with the highest welfare. Hence, this probability
is at least $\frac{88}{91}$.

Let $Q_1=A_1\cup B_1$ and $Q_2=A_{2}-j\cup B_2$. Note that at least one of the agents $1, 2$ is not agent $i'$. Assume without lost of generality that agent $2$ is different than agent $i'$ (the other case is similar).

We inspect the case where either $2\in A_1\wedge i'\in Q_2$ or $2\in A_{2}-j\wedge i'\in Q_1$. This happens with probability $1/4$
(recall that each agent in set $N'$ has the same independent probability of being placed in either $A_1$,  $B_1$, $A_{2}-j$ or $B_2$).

Assume without loss of generality that $2\in A_1$ and $i'\in Q_2$ (the other case is symmetric). 
Using the same argument as in the previous section, we have that
the expected revenue from mechanism $\mooa(Q_2,p_{A_1},1/4)$ is at least min$\{\frac{1}{80},\frac{\gamma}{10}\} = \frac{1}{80}$.
By using Lemma \ref{lm:Restriction} with $S=Q_2$ and $z=1/2$ we get that the expected revenue out of agents in $B_2$ in $\moos$ is at least $\frac{1}{160}$. Therefore, the expected revenue in this case is at least $(1-\mu)\frac{88}{91\cdot 4\cdot 160}\geq \frac{13}{10000}$. \\\\
\textbf{Case C} (one ``dominant" agent): $\max_{i\in V}b_i > \gamma/10$ and $\bar{v}_1<\gamma\bar{v}_2$.\\
We bound the expected liquid welfare from mechanism $\moov$.
With probability $1/4$ agent 1 is placed in set $B$ and agent $2$ is placed in set $A$, and with probability $1/2$ the item is sold in the second phase of mechanism $\moov$. Hence, the expected liquid welfare is at least 
$\frac{\gamma}{2\cdot4\cdot10}\mu= \frac{1}{80}\mu\gamma = \frac{1250125}{1000000000}$.\\

We conclude that the expected liquid welfare is at least $min\{\frac{1}{1000}, \frac{13}{10000}, \frac{1250125}{1000000}\} = \frac{1}{1000}$.
Recall that in the event that $OPT'\geq \frac{9}{10}OPT$, $OPT \leq \frac{20}{9}$.  
According to Lemma \ref{lm:OPTtToOPTRelation}, $\Pr[OPT'\geq \frac{9}{10}OPT]\geq 91/100$.
Therefore, the approximation ratio is at most $\frac{\frac{20}{9}}{\frac{1}{1000}\frac{91}{100}} < 2443$. 
\end{proof}

% REFERENCES %%%%%%%%%%%%%%%%%%%%%%%%%%%%%%%%%%%%%%%%%%%%%%%%%%%%%%%
\bibliographystyle{abbrv}
\bibliography{tswbBBL}

% APPENDIX %%%%%%%%%%%%%%%%%%%%%%%%%%%%%%%%%%%%%%%%%%%%%%%%%%%%%%%%%
\appendix
\section{Omitted Mechanisms} \label{appsec:Omit_Mech}
\begin{figure}[H]
\MyFrame{
$\mof\lp S, k\rp$
\begin{enumerate}
    \item Partition the agents in set $S$ into two sets $S_1$ and $S_2$,
          by tossing an independent fair coin for each agent. Set $p_1\gets p^*_{S_1, m/2}$ and $p_2\gets  p^*_{S_2, m/2}$.

    \item Let $x$ be the allocation returned by $\moo(S_1, p_2, k/2)$
          and $x'$ be the allocation returned by $\moo(S_2, p_1, k/2)$.

    \item For every agent $i \in S_1$: Allocate to agent $i$ $x_i$ items at a price of $x_i \cdot p_2$.

    \item For every agent $i \in S_2$: Allocate to agent $i$ $x'_i$ items at a price of $x'_i \cdot p_1$.

    \item Run a correlated lottery between the agents who received fractions of items to resolve the real allocation.
\end{enumerate}
} \caption{The offline mechanism presented in \cite{borgs2005multi} for selling indivisible items and maximizing revenue.}
\label{fig:offline}
\end{figure}

\begin{figure}[H]
	\MyFrame{
		$\mooa\lp S, p, k\rp$
		\begin{enumerate}
			\item Let $x$ be the allocation returned by $\moo\left(S,p,k\right)$.
			
			\item For each agent $i \in S$ allocate $x_i$ items at price of $x_i \cdot p$.
			
		\end{enumerate}
	} \caption{A mechanism for selling divisible items at a given price.}
	\label{fig:M1}
\end{figure}

\begin{figure}[H]
	\MyFrame{
		$\moob\lp S, p, k, z\rp$
		\begin{enumerate}
			\item Initialize $m \gets k$.
			
			\item For each agent $i \in S$: With probability of $z$ place $i$ in set $S_1$ and with probability of $1-z$ in set $S_2$.
			%\item Let $\pi:\{1,\ldots,|S_1|\}\rightarrow S$ be and arbitrary ordering of the agents in $i$.
			%\item Let $\pi$ be a permutation of $S_1$ chosen uniformly at random from the set of all permutations.
			%\item For $i$ from 1 to $|S_1|$:
			\item For each agent $i\in S_1$ (for an arbitrary order of the agents):
			\begin{enumerate}
				%\item  Let $x'$ be the allocation returned by $\moo\left(\{S_2 \cup i\},p,k\right)$. Let $x_i \gets\min\{x'_i, m\}$. Allocate to agent $i$ $x_i$ items at a price of $x_i \cdot p$.  \label{alg: M2_Allocation}
				\item  Let $x'$ be the allocation returned by $\moo\left(\{S_2 \cup i\},p,k\right)$. Let $x_i \gets\min\{x'_i, m\}$. Allocate to agent $i$ $x_i$ items at a price of $x_i \cdot p$.  \label{alg: M2_Allocation}
				
				\item Update $m \gets m - x_i$.
			\end{enumerate}
			
		\end{enumerate}
	} \caption{A mechanism defined solely for the analysis of our mechanisms.}
	\label{fig:M2}
\end{figure}

\section{Proof of Lemma \ref{lm:rounding}} \label{appsec:div_indiv}
We consider only the case where each agent in $S$ has a budget
smaller than $p$ (it is clear that for the integral part of
$\frac{b_i}{p}$, mechanisms $\mooo$ and $\moo$ will allocate
the same amount of items). Let $S' = \{i \in S:v_i \geq p\}$ (agents
in $S \setminus S'$ have no effect on the allocation). Let $B_{S'} =
\sum_{i\in S'}b_i$. We analyze the following cases:
\\\\	
\textbf{Case A:}	
$k = 1$ and $B_{S'} \geq p$.
It is clear that in this case $\sum_{i\in S}\tilde{x}'_i = 1$ (mechanism
$\moo$ sells the entire item).
$\E_\mooo\left[\sum_{i\in S}\tilde{x}_i\right]$ is the expected
amount of items charged by $\mooo$ before the item
is sold. Let $j$ be the size of set $S$ and assume by renaming that $1,\ldots,j$ is the order in which
mechanism $\mooo$ iterates the agents in set $S'$.
Every time an agent $i\in S'$ is accessed before the item is sold,
she is charged for a $\tilde{x}_i = b_i/p$ fraction, and gets the item
with probability $\tilde{x}_i$. 

Let $P_i$ be the probability that agent $i\in
S'$ is accessed before the item is sold. Clearly,
$X=\sum_{i=1}^{j}P_i\cdot \tilde{x}_i$ is the expected number of
items the agents of $S'$ are charged for before the item is sold.

For the sake of the analysis, consider the following. 
Let $r=\min\{t\in S': \sum_{i=1}^t \tilde{x}_i \geq 1\}$.
Set $y_i=\tilde{x}_i$ for every $i < r$, $y_{r}=1-\sum_{i=1}^{r-1}y_i$ 
and $y_i = 0$ for every $i>r$.
Let $z\sim U[0,1]$ be a uniformly sampled number between 0
and 1. Let $\tilde{j}=\min\{t\in S': \sum_{i=1}^t y_i \geq z\}$, and
let $X'=\sum_{i=1}^{\tilde{j}} y_i$. Since $\E[z]=1/2$, we have
that $\E[X']\geq 1/2$. Let $P'_i$ be the probability that $i\leq
\tilde{j}$. Therefore, $\E[X']=\sum_{i=1}^{j} P'_i\cdot y_i$.

To complete this case we show by induction that $P_i \geq P'_i$
for every $i \in S'$. Clearly $P_1 = P'_1 = 1$. For $i > 1$, by definition
$P_i=P_{i-1}\cdot (1-\tilde{x}_{i-1})$. On the other hand, 
$$P'_i = P'_{i-1}\cdot\left(1-\frac{y_{i-1}}{\sum_{t=i-1}^{j}y_t}\right)\leq P_{i-1}\cdot(1-y_{i-1})= 
P_{i-1}\cdot(1-\tilde{x}_{i-1})= P_i,$$ 
where the inequality stems from the induction hypothesis.
Since $\tilde{x}_i\geq y_i$ for any $i \in S'$, we conclude that $X\geq \E[X'] \geq 1/2$.\\\\
\textbf{Case B:}
$k \geq 1$ and $B_{S'} \leq p$ (agents can afford at most a single item). 
By a similar analysis to that of the \textbf{Case A}, we get that
$\E_\mooo\left[\sum_{i\in S}\tilde{x}_i\right] \geq \sum_{i\in S}\tilde{x}'_i/2$.
\\\\
\textbf{Case C:}
$k \geq 2$ and $\sum_{i\in S}\tilde{x}'_i = k' > 1$. 
For simplicity we assume that $k'$ is integral, but a similar argument can
be applied to the fractional case. Recall that mechanism $\mooo$ sells the items
sequentially. Let $B_{S',\ell}$ be the sum of the budgets of the agents in set $S'$, 
after mechanism $\mooo$ sells item $\ell-1$ (for $\ell=1$ clearly $B_{S',\ell} = B_{S'}$).
We consider the following cases:

\begin{enumerate}
	\item Mechanism $\mooo$ sold at least $k'-1$ items and $B_{S',k-1} \geq p$. Therefore, before selling each of the items $1,\ldots, k'$, we have that the sum of the budgets of agent in $S'$ is at least $p$. Hence, we can apply the analysis of \textbf{Case A} and conclude that
	$\E_{\mooo}\left[\sum_{i\in S}\tilde{x}_i\right] \geq k'/2$.
	
	\item Mechanism $\mooo$ sold less then $k'-1$ items or $B_{S',k-1} < p$. In this case, agents in $S$ were charged for at least $k'-1$ items (i.e. $\sum_{i\in S}\tilde{x}_i \geq k'-1$). 
	Since $k' \geq 2$, $k'-1 \geq k'/2$. This complete the proof.
\end{enumerate}\qedsymb\\\\
\section{Proof of Theorem \ref{thm:liq_truth}} \label{appsec:liquid_analysis}
Since the mechanism is a probabilistic combination of two mechanism, it is sufficient to show that each of the mechanisms is truthful.\\\\
{\bf $\moos$ Mechanism}\\
One can easily verify that mechanism $\moos$ is almost identical to
mechanism $\mo$. The only differences between the mechanisms are
the prices and the allocation method which does not use lotteries. Hence, the proof is similar to the proof of Theorem \ref{thm:Truthfulness}.\\\\
{\bf $\moov$ Mechanism}\\
Most of the cases have a similar proof as in mechanism  $\mo$. We describe only the cases with a different proof. Let
$i$ be any agent. We prove that the utility of agent $i$ can only
decrease if she misreports her type.\\\\
\textbf{Misreporting ${\bm v_i},{\bm b_i}$:} Agents in set $B$ cannot change their price
or position by misreporting their value or budget. Since given a fixed
price and a position the mechanism allocates the best possible
allocation, agents do not have the incentive to misreport their bids.

For agent $i\in A$, if $\bar{v}_i \geq \gamma \cdot \bar{v}_{i_2}$
then the agent has a probability of $\frac{1}{2}$ to receive the
item at a price of $\gamma \cdot \bar{v}_{i_2}$. Agent $i$ cannot
affect the price or the probability to win the item by misreporting
$\bar{v}_i$. If $\bar{v}_i < \gamma \cdot \bar{v}_{i_2}$, then to
win the item agent $i$ has to bid at least $\gamma \cdot
\bar{v}_{i_2}$ and to exceed her budget or value.\\\\
\textbf{Misreporting ${\bm a_i}$:} We analyze the following case (analogous to case 4 of misreporting $a_i$ in $\mo$).\\ 
The mechanism places agent $i$ in set $A$ according to her true arrival
time and in set $B$ according to her reported arrival time. Let
$\bar{v}_{i_1} \geq \bar{v}_{i_2} ... \geq
\bar{v}_{i_{\frac{n}{2}}}$ be the agents in set $A$ ordered
according to their liquid value, assuming that agent $i$ reported her
real arrival time. If $\bar{v}_i < \gamma \cdot \bar{v}_{i_2}$,
agent $i$ will not receive the item whether she is in set $A$ or in set $B$.
If $\bar{v_i} \geq \gamma \cdot \bar{v}_{i_2}$, then the agent
receives the item at price $\gamma \cdot \bar{v}_{i_2}$ with
probability $1/2$ whether she is in set $A$ or in set $B$  (note that in this case, if
agent $i$ is placed in set $B$ then agent $i_2$ becomes the agent
with the highest liquid value in set $A$).

\section{Tie Breaking} \label{appsec:tie-break}

In this section we show how to perform tie breaking when an agents is allowed to report the same arrival time as another agent, but cannot report an earlier arrival time than her real arrival time. The tie breaking rule we present is as follows --- whenever an agent $i$ arrives at time $a_i$, the mechanism chooses a uniformly at random value $\tilde{a}_i\sim [0,1]$ and sets agent $i$'s arrival time to be $\langle a_i,\tilde{a}_i\rangle$. The mechanism orders the agents according to the following rule: Agent $i$ precedes agent $j$ if $a_i< a_j$ or if $a_i=a_j$ and $\tilde{a}_i< \tilde{a}_j$. 

We change mechanisms $\mo$, $\modd$, $\moov$ and $\moos$ according to the above rule. 
We claim that these mechanisms are truthful. 
The only problematic case is when an agent reports an earlier arrival time such that she is placed in set $A$ instead of being placed is set
$B$ (the proof all all the other cases is similar to the proof of Theorem \ref{thm:Truthfulness}).
Since an agent cannot affect the random value assigned to her, and since we specifically prevent an agent from reporting an earlier arrival time, this is impossible.

\end{document}